\title{A Stochastic Compartmental Model for Fast Axonal Transport}
\author{Lea Popovic\thanks{Department of
Mathematics and Statistics, Concordia University, Montreal QC H3G 1M8} \and Scott A.~McKinley\thanks{Department of Mathematics, University of Florida, 358 Little Hall Box 118105, Gainesville, FL 32605 ({\tt
scott.mckinley@ufl.edu})} \and Michael
C.~Reed\thanks{Department of Mathematics, Duke University, Box
90320, Durham, NC 27701}}
\begin{document}

\maketitle

\begin{abstract}
In this paper we develop a probabilistic micro-scale compartmental model and use it to study macro-scale properties of axonal transport, the process by which intracellular cargo is moved in the axons of neurons. By directly modeling the smallest scale interactions, we can use recent microscopic experimental observations to infer all the parameters of the model.  Then, using techniques from probability theory, we compute asymptotic limits of the stochastic behavior of individual motor-cargo complexes, while also characterizing both equilibrium and non-equilibrium ensemble behavior. We use these results in order to investigate three important biological questions: (1) How homogeneous are axons at stochastic equilibrium? (2) How quickly can axons return to stochastic equilibrium after large local perturbations?  (3) How is our understanding of delivery time to a depleted target region changed by taking the whole cell point-of-view? 

\end{abstract}

\pagestyle{myheadings} \thispagestyle{plain}

\newcounter{other}
\newtheorem{remark}[other]{Remark}

\setlength{\unitlength}{1 cm}

\def\ep{\epsilon}
\def\gen{\mathcal{A}}
\def\ev{\mathbf{e}}
\def\xv{\mathbf{x}}
\def\pv{\mathbf{p}}
\def\qv{\mathbf{q}}
\def\Pv{\mathbf{P}}
\def\Qv{\mathbf{Q}}
\def\ddt{\frac{d}{dt}}
\def\ep{\varepsilon}
\newcommand{\E}[1]{\mathbb{E}{\left[ #1\right]}}
\newcommand{\V}[1]{\mathrm{Var}{\left[ #1\right]}}
\newcommand{\p}[1]{\mathbb{P}{\left\{ #1\right\}}}
\def\gbf{\mathbf{g}}
\def\lbf{\boldsymbol{\lambda}}
\def\lambdabf{\boldsymbol{\lambda}}
\def\ebf{\mathbf{e}}

\section{Introduction}

In all cells, one finds that proteins, membrane-bound organelles,
and other structures (e.g. chromosomes) are transported from place
to place at speeds much higher than diffusion. Though these
transport processes are fundamental to cell function, many of the
underlying mechanisms, organizational principles, and regulatory
features remain unknown. Axonal transport is one of the best studied
systems because the transport is basically one-dimensional since
axons are long and narrow. There are two speeds of axonal transport.
Fast transport goes at speeds of roughly $0.2$ to $0.5$ meters/day
\cite{lasek82}\cite{ochs72}, while slow transport goes at
approximately 1 millimeter/day, the rate of axon growth and
regeneration \cite{brown00}\cite{lasek82}. The biology and
principles of slow transport are not yet clear \cite{brown00}, but
the basic mechanisms of fast axonal transport were discovered in the
1980s \cite{allen85}\cite{blum85}\cite{miller85}\cite{vale85}. The
model in this paper refers to fast axonal transport, which we will
henceforth call axonal transport.

The axonal transport apparatus consists of vesicles which form
reversible chemical bonds with motor proteins that bind reversibly
to microtubules which run parallel to the long dimension of the axon
\cite{alberts08}. When the vesicle-motor protein complex is
assembled on the microtubule, the complex steps stochastically with
step size approximately 8 nanometers for kinesin and dynein and 10
nanometers for myosin
\cite{carter05}\cite{finer94}\cite{gennerich07}\cite{svoboda93}. The
vesicles enter from the cell body on microtubules and then detach
and reattach to the transport mechanism at random times.

In this paper we propose a spatial Markov-chain compartmental model based
on these dynamics.  We will assume independence of the interactions,
and exponential wait times between events.  While we address the
validity of these assumptions in the Discussion section, we
consider this a useful ``first-order'' approximation that permits
study of the dynamics from both the perspective of individual
vesicles as well as that of the full spatial system. Such a model
unifies all earlier deterministic and stochastic modeling efforts and can accommodate both
qualitative and quantitative experimental data observed on multiple
scales.

In much experimental work in the 1970s and 1980s, radio-labeled
amino acids were put into the cell bodies continuously or for a few
hours. The amino acids were incorporated into proteins that were
packaged into vesicles and put on the transport system so that at
later times radioactivity could be seen moving progressively down
the axons. In the continuous infusion case, one would see a wave of
radioactivity with a sharp but slowly spreading wavefront
propagating at constant velocity down the axon. In the case of
infusion for a few hours one would see at long times a slowly
spreading pulse of radioactivity that looked normally distributed.
It was to understand this behavior that Reed and Blum constructed
PDE models for axonal transport
\cite{blum85}\cite{reed86}\cite{reed94}. These models did not have
traveling wave solutions, but the data certainly looked like
approximate traveling waves. In \cite{reed90} it was shown by a
perturbation theory argument that, in the asymptotic limit where the
unbinding and binding rates $k_2$ and $k_1$ get large, the solution
approaches a slowly spreading traveling wave or a normal pulse.
Recently, in a series of papers, Friedman and co-workers have
introduced new PDE models and proved these results
rigorously\cite{friedman05}\cite{friedman06}\cite{friedman07}\cite{friedman07-2}.

Probabilistic models for axonal transport were introduced and used
for simulations already in the 1980s
\cite{stewart82}\cite{takenaka84}. However, rigorous work began with
Lawler \cite{lawler95} in 1995 and was continued by Brooks
\cite{B99} who used a continuous time stochastic model to show that
the distribution of an individual particle is a spreading Gaussian
at large times. Brooks also proved tail estimates for the central
limit theorem and used them to estimate the error from normal.
Independently, Bressloff \cite{bressloff06} developed a discrete stepping model and performed an analysis under the assumption that the rate of unbinding and binding to transport is fast relative to lateral velocity over the length scale of interest.  The author derived a characterization of the spreading wavefront of a particle entering at the nucleus and traveling to the distal end.  This model served as the basis for later investigations by Newby and Bressloff \cite{newby09, newby10} wherein the authors characterize the axonal transport system as an intermittent search for hidden targets.  

In this paper we revise the existing probabilistic models in order
to study randomness in the system as a whole rather than
exclusively from the point of view of an individual particle.  Our goal is
not only to recover and generalize previous results, but also to investigate three specific, biologically important, properties of the whole stochastic system. 

\subsection{Summary of Results}
\label{subsec:summary-results} 

In Section \ref{subsec:model}, we create a continuous-time Markov chain queueing model for the axonal transport system.  We show how to use experimental data to determine (or estimate) all the parameters of the model.

In Section \ref{sec:particle-perspective}, we take the individual vesicle point-of-view. We prove the asymptotic forms in \cite{reed90} with rigorous error estimates. We show that in the limit as the compartment size becomes small our model becomes the probabilistic model of \cite{B99}. We also show that in the limit as the length of the axon and time become large (with the scale of axon length on the order of the squared scale of time) our model becomes the PDE model of \cite{reed94}. 
Since we assume that particles are independent, the time evolution of the  law of an individual will reflect the behavior of an ensemble of particles released at the same time.

In Section \ref{sec:system-persective} we adopt the
full spatial system perspective to quantify stochasticity along the
length of the axon. We begin by calculating in Proposition
\ref{prop:equilib} the stationary distribution of a flow-through
system that has sustained input from the nucleus, while particles
are removed upon reaching the distal end.  The stationary
distribution has a product Poisson structure which allows for
seamless transition between spatial scales.  

With this mathematical model, we are able to make precise statements about three biologically important system properties.  In the stationary distribution, the number of vesicles in each slice of the axon is independent and identically distributed (Proposition \ref{prop:equilib}), however this in and of itself is not sufficient to account for the sense that samples taken for different parts of the cell ``look the same.''  We compute in Section \ref{subsec:homog-at-equilib} the coefficient of variation for the number of vesicles in sections of different length and show  that the coefficient of variation is low for all but the smallest length scales. In Section \ref{sec:target}, we study the intermittent search problem posed by Newby and Bressloff \cite{newby09} from the system point of view.  Efficient transport to locations that need material must balance the speed of transport of material from the nucleus to the distal end of the cell with the rates of dissociation from the transport apparatus.  We calculate the expected hitting time for a hidden target by all vesicles in the system. In so doing we encounter the counterintuitive result that while increasing the velocity of the motors while on transport increases the chance of any particular vesicle missing the target, the expected hitting time by the system actually decreases.

This hitting time approach is natural for needed material that is sparsely distributed throughout the axon, but when the needed cargo in question is more common, the time to replenishment is better addressed through the ODE approach that we develop in Section \ref{sec:approach-to-equilib}.  Due to the product structure of the law of the transient dynamics, this non-equilibrium behavior is determined by the $2N$-dimensional ODE governing the means. From this we estimate the timescale of return to equilibrium as a function of the length scale of interest.

\

\section{The model and its parameters}
\label{subsec:model}

Let $L$ be the length of the axon, divided evenly into $N=L/\delta$ lateral
sections each of length $\delta$, equal to the step size of the
motor protein. Within each section, we disregard any further spatial
geometry and take the particles to be in one of two states:
\begin{itemize}
\item[$\cdot$] an on-transport state that steps laterally at a rate $r=v/\delta$ per section, or
\item[$\cdot$] an off-transport state that does not step laterally.
\end{itemize}


\begin{figure}[h]
\begin{center}
\begin{picture}(11, 5)

    \linethickness{0.075 mm}

    \thicklines
    \put(1,1){\line(1,0){10}}
    \put(1,4){\line(1,0){10}}

    \thinlines
    \multiput(4.5,1.15)(3.75,0){2}{
    \multiput(-0.05,0)(0,0.5){6}{\line(0,1){0.2}}}

    \put(0.6,0.5){$x = 0$}
    \put(3.95,0.5){$x = \frac{L}{3}$}
    \put(7.7,0.5){$x = \frac{2L}{3}$}
    \put(10.5,0.5){$x = L$}

    \put(2.45,4.7){$Q_1 = 3$}
    \put(2.5,4.2){$P_1 = 3$}
    \put(5.85,4.7){$Q_2 = 1$}
    \put(5.9,4.2){$P_2 = 2$}
    \put(9.45,4.7){$Q_3 = 3$}
    \put(9.5,4.2){$P_3 = 2$}

    \put(0,1.25){\line(0,1){2.5}}
    \put(0.7,1.25){\line(0,1){2.5}}
    \put(0,1.25){\line(1,0){0.7}}
    \put(0,3.75){\line(1,0){0.7}}

    \put(0.21,3.3){N}
    \put(0.24,3.0){u}
    \put(0.26,2.7){c}
    \put(0.28,2.3){l}
    \put(0.26,2.0){e}
    \put(0.23,1.7){u}
    \put(0.27,1.4){s}


    \put(0.7, 3.65){\line(1,0){1.1}}
    \put(0.7, 3.2){\line(1,0){0.6}}
    \put(0.7, 2.3){\line(1,0){0.8}}
    \put(0.7, 1.7){\line(1,0){1.4}}
    \put(0.7, 1.3){\line(1,0){1}}

    \multiput(2,2)(4,-0.7){2}{\line(1,0){1.5}}
    \multiput(3.7,3.7)(3,-0.25){2}{\line(1,0){2}}
    \multiput(1.2,2.8)(1.4,-0.5){4}{\line(1,0){0.7}}
    \multiput(1.6,1.2)(1.4,0.5){4}{\line(1,0){0.6}}
    \multiput(2.0,1.4)(2,-0.3){2}{\line(1,0){1.5}}
    \multiput(2.8,2.9)(2,0.3){2}{\line(1,0){1.5}}
    \multiput(6.2,2.1)(1.4,0.5){3}{\line(1,0){0.7}}
    \multiput(6.9,2.35)(1,-0.5){3}{\line(1,0){1.2}}
    \multiput(7.3,3.2)(1.4,-0.5){3}{\line(1,0){0.9}}
    \put(1.4,3.3){\line(1,0){1.2}}
    \put(9.7,1.6){\line(1,0){1.2}}
    \put(9.3,3.6){\line(1,0){1.1}}

    \put(2,1.9){$\bullet$}
    \put(9.4,3.0){$\bullet$}
    \put(10.2,2.1){$\bullet$}
    \put(1,3.1){$\bullet$}
    \put(4.1,1.7){$\bullet$}
    \put(6.3,1.2){$\bullet$}
    \put(8.6,1.75){$\bullet$}
    \put(2.5,1.6){$\ast$}
    \put(3.2,3.6){$\ast$}
    \put(5.1,2.4){$\ast$}
    \put(6.8,2.7){$\ast$}
    \put(3.6,2.5){$\ast$}
    \put(9.3,2.2){$\ast$}
    \put(10.1,2.6){$\ast$}

\end{picture}


\begin{picture}(10.5,6)
    \linethickness{0.075 mm}

    \thicklines

    \put(-0.5,4.5){\vector(1,0){1}}
    \put(-0.3,4.65){$q_0 r$}

    \put(0.5,0.5){\line(0,1){2}}
    \put(0.5,0.5){\line(1,0){2}}
    \put(2.5,0.5){\line(0,1){2}}
    \put(0.5,2.5){\line(1,0){2}}
    \put(0.7,0.7){$P_1$}
    \put(0.6,2.85){$k_2$}
    \put(2.1,2.85){$k_1$}
    \put(0.7,5){$Q_1$}

    \qbezier(1.2,3.5)(0.9,3)(1.2,2.5)
    \qbezier(1.8,3.5)(2.1,3)(1.8,2.5)
    \put(1.1,2.7){\vector(1,-2){0.1}}
    \put(1.9,3.3){\vector(-1,2){0.1}}

    \put(0.5,3.5){\line(0,1){2}}
    \put(0.5,3.5){\line(1,0){2}}
    \put(2.5,3.5){\line(0,1){2}}
    \put(0.5,5.5){\line(1,0){2}}

    \put(2.5,4.5){\vector(1,0){2}}
    \put(3.45,4.65){$r$}

    \put(4.5,0.5){\line(0,1){2}}
    \put(4.5,0.5){\line(1,0){2}}
    \put(6.5,0.5){\line(0,1){2}}
    \put(4.5,2.5){\line(1,0){2}}

    \put(4.7,0.7){$P_2$}
    \put(4.6,2.85){$k_2$}
    \put(6.1,2.85){$k_1$}
    \put(4.7,5){$Q_2$}

    \qbezier(5.2,3.5)(4.9,3)(5.2,2.5)
    \qbezier(5.8,3.5)(6.1,3)(5.8,2.5)
    \put(5.1,2.7){\vector(1,-2){0.1}}
    \put(5.9,3.3){\vector(-1,2){0.1}}

    \put(4.5,3.5){\line(0,1){2}}
    \put(4.5,3.5){\line(1,0){2}}
    \put(6.5,3.5){\line(0,1){2}}
    \put(4.5,5.5){\line(1,0){2}}

    \put(6.5,4.5){\vector(1,0){2}}
    \put(7.45,4.65){$r$}

    \put(8.5,0.5){\line(0,1){2}}
    \put(8.5,0.5){\line(1,0){2}}
    \put(10.5,0.5){\line(0,1){2}}
    \put(8.5,2.5){\line(1,0){2}}

    \put(8.7,0.7){$P_3$}
    \put(8.6,2.85){$k_2$}
    \put(10.1,2.85){$k_1$}
    \put(8.7,5){$Q_3$}

    \qbezier(9.2,3.5)(8.9,3)(9.2,2.5)
    \qbezier(9.8,3.5)(10.1,3)(9.8,2.5)
    \put(9.1,2.7){\vector(1,-2){0.1}}
    \put(9.9,3.3){\vector(-1,2){0.1}}

    \put(8.5,3.5){\line(0,1){2}}
    \put(8.5,3.5){\line(1,0){2}}
    \put(10.5,3.5){\line(0,1){2}}
    \put(8.5,5.5){\line(1,0){2}}

    \put(10.5,4.5){\vector(1,0){1}}
    \put(10.9,4.65){$r$}

\end{picture}
\end{center}
\caption{On-and-off Transport Chain} \label{fig:double-chain}
\end{figure}
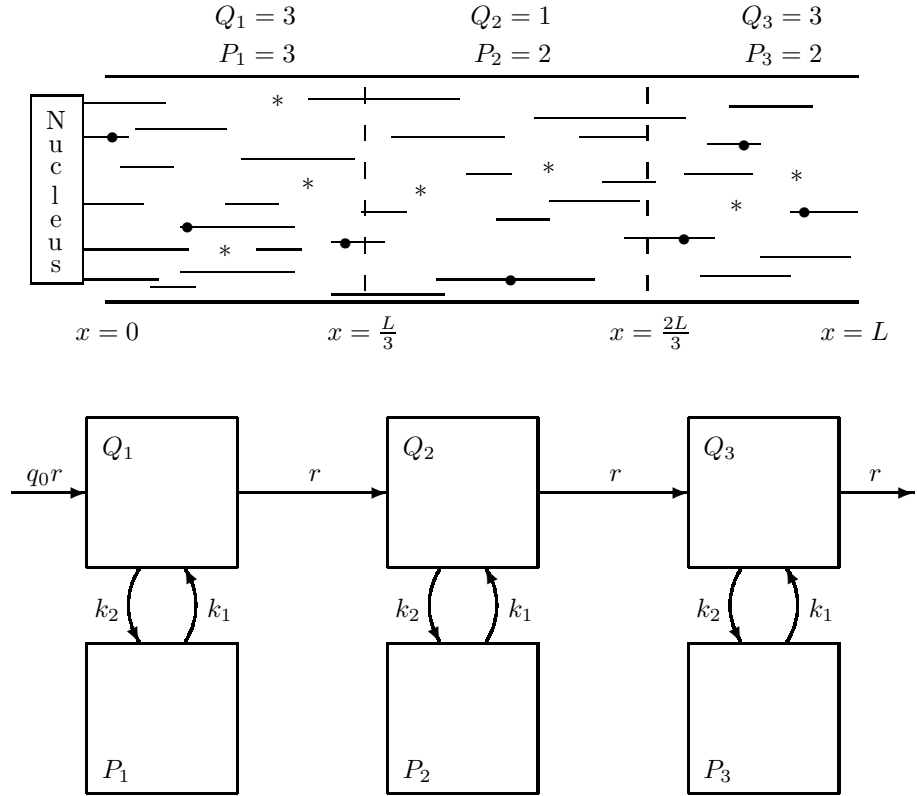

\noindent We use a $2N$-dimensional continuous time Markov chain $\{(Q_i(t),P_i(t)), i=1,\dots, N\}$ to model the
particle dynamics, where
\begin{itemize}
\item[$\cdot$] $Q_i(t)$ is the number of particles at time $t$ in on-transport state in section
$i$,
\item[$\cdot$] $P_i(t)$ is the number of particles at time $t$ in off-transport state in section $i$.
\end{itemize}
\begin{definition}\hspace{-1mm}\emph{(Stochastic compartmental model)}
\label{def-particle} Let $(\{(Q_i(t),P_i(t)), i=1,\dots, N\})_{t\ge 0}$ be a continuous time Markov chain on 
the state space $\mathbb{N}^{N}\!\times\mathbb{N}^{N}$ with the following transitions and time dependent rates:
\begin{itemize}
\item[$\cdot$] {(Lateral transport)} \\
$(Q_{i}, Q_{i+1}) \to (Q_i -1, Q_{i+1}+1)$ at rate $r Q_i(t)$;
\item[$\cdot$] {(Switch from on-transport to off-transport)} \\
$(Q_{i},P_i) \to (Q_i - 1, P_{i}+1)$  at rate $k_2 Q_i(t)$;
\item[$\cdot$] {(Switch from off-transport to on-transport)} \\
$(P_{i},Q_i) \to (P_i - 1, Q_{i} +1)$ at rate $k_1 P_i(t)$;
\item[$\cdot$] {(Production of new particles)}\\
$(Q_1) \to (Q_1 + 1)$ at rate $r q_0$;
\item[$\cdot$] {(Removal of particles at distal end)}\\
$(Q_N) \to (Q_N - 1)$ at rate $r Q_N(t)$.
\end{itemize}
\end{definition}

The lateral transport rate, $r = v/\delta$, is inversely proportional to the
length scale so that the mean number of particles per unit length is
invariant with respect to rescaling $\delta.$ We will assume that the rate of
production $q_0=\delta\rho_0$, for some constant $\rho_0>0$, in order for the mean number of particles in each compartment to scale with the size  $\delta$ of each compartment. This will imply that the mean number of particles per unit length scales as $\rho_0$.  A graph of the model is depicted in Figure 1.

In order to insure the Markov property, we use exponential random
variables for the waiting times between transition events.
Specifically, we mean that after a given event we assign a new
independent random variable to each of the $3N + 2$ possible next
events, exponentially distributed with the appropriate rate
parameter. The system of values updates according to the transition
associated with the minimum of these waiting times.  Then we create
a new set of exponential random variables and the process proceeds
as before.

The advantage of computing
explicit formulas for quantities that can be observed in experiments
is that the experimental data can then be used to determine the
parameter values in the model. For the characterization of the
approximate wavefront speed and spreading in Section
\ref{sec:particle-perspective} and the homogeneity calculations in
Section \ref{sec:system-persective} we need order-of-magnitude
estimates for the parameters. Actual parameter values will certainly
differ depending on the particular neural tissue and the particular
particles being transported. However, we can get order-of-magnitude
estimates from existing data.

First we recall that fast transport has been observed to travel at
speeds of 0.2 to 0.5 $m$ per day.  We can assume that the average
velocity of particles while physically bound to microtubules is
roughly 1 $m$ per day, or $v=10^{-6}$ $m/s$.  We assume that the compartment size scales as the length scale of the individual steps of the motor protein, so $\delta \sim 10^{-8} m$.  This implies that the rate
parameter should be $r = v / \delta = 100 s^{-1}$.

We now turn our attention to the on-off rates rates $k_2$ and $k_1$.
These can be determined from experimentally observed run lengths on
the transport system. Indeed, Dixit et al.~\cite{dixit08} show that
a typical run along microtubules for dinein and kinesin is on the
order of $10^{-6} m$. We can compare this with the theoretical run
length of the model to determine off-rate $k_2$. Within the model,
at each step on the transport mechanism the particle has a binary
decision to jump laterally along the transport with probability $r /
(k_2 + r)$, or to jump off with probability $k_2/ (r + k_2)$. The
number of jumps along the transport system before jumping off is
therefore geometrically distributed on the set $\{0, 1, \ldots\}$
with success probability $\frac{r}{r + k_2}$.  It follows that
average number of steps in the run is $\frac{r}{k_2}$ and therefore
the average run length is $\frac{r}{k_2} \times 10^{-8} \, m$.
Setting this equal to the average experimental run length of
$10^{-6}$ from \cite{dixit08}, we see that $\frac{r}{k_2} \sim
100$, implying that $k_2 \sim 1 s^{-1}$. As we will see in the
computation of the stationary distribution in Section
\ref{subsec:system-stationary-dist}, the ratio of the expected
number of particles on the track to those off the track is
$\frac{k_2}{k_1}$. Dixit \cite{dixit08} found that approximately
75\% of the particles were motile so this ratio is approximately
equal to 3. Since $k_2 \sim 1 s^{-1}$ we see that $k_1 \sim
\frac{1}{3}.$

It remains to estimate $q_0$. We will see in Proposition
\ref{prop:equilib} that the mean number of particles per compartment
is $(1 + \frac{k_2}{k_1}) q_0 = 4 q_0$. Of course, axons have a
large variety of diameters and larger axons will have more vesicles
per unit length so one expects a range of values for $q_0$. However,
examination of a large number of electron micrographs of axonal
cross-sections (see for example \cite{gross82}, Fig.~3;
\cite{jastrow-url}; \cite{moranurl}), which are typically 100 $nm$
thick enables one to estimate the number of vesicles per 100 $nm$
segment. This number is typically in the range of 10 to 100 which
implies that there are 1 to 10 vesicles per compartment in our model.
Therefore $q_0$ is in the range 0.25 to 2.5, for various axons.

We remark that we are ignoring some aspects of the physics and the
biology of axonal transport. We are not including diffusion of the
vesicles off the track. We are treating the microtubule track as
though it were a single continuous entity from one end of the axon
to the other, when it fact it consists of numerous, separated,
microtubule fragments. And, we are ignoring retrograde transport and
the details of the motor proteins. Nevertheless, this simple model
will enable us to investigate the homogeneity questions that are the
main goal of this paper.

\

\section{Dynamics from the Particle Perspective}
\label{sec:particle-perspective}

In this section we calculate properties of the stochastic dynamics
by using stochastic convergence theorems and stochastic averaging theorems from probability theory.  We first see that, in the $\delta
\rightarrow 0$ limit, the law of the location of a single particle corresponds to that of a particle with a piecewise linear Markov motion. We then show this law can be approximated by the Green's function of a linear partial differential equation. This enables us to obtain, as a special case,  the asymptotic behavior of the PDE models for axonal transport in an either rate limiting or perturbed setting.

\subsection{The active transport mode}
\label{subsec:particle-active-transport}

 We first consider the simple case where the
particle starts at $X^\delta_0 = 0$ and stays exclusively in active transport mode. Let
$X^\delta_t \in \{0,\delta, 2 \delta, \ldots, L\}$ be the lateral position of
a particle at time $t$ and let $n^\delta_t$ be the number of jumps made by
the particle as of time $t$. Observe, $X^\delta = \delta n^\delta$.

\begin{lemma} \label{prop:active-transport}
Let $k_2 = k_1 = 0$, and $r = v/\delta>0$, then  
 the position of the particle satisfies, for any time $t<\infty$
\[\sup\limits_{s\leq t}\big|X^\delta_s-vs\big|\;\; \mathop{\longrightarrow}\limits_{\delta\to 0}\;\; 0 \; \; \mbox{ a.s.} \;\;\] 
\mbox{and}, for  $B$  a standard Brownian motion
\[\frac{1}{\sqrt{\delta}}\Big(X^\delta_t-vt\Big)_{t\geq 0}\;\;
\mathop{\Longrightarrow}\limits_{\delta\to 0}\;\; \sqrt{v} \big(B_t\big)_{t\geq
0}\]
in distribution on the Skorokhod space of cadlag (right continuous left limited) functions.
\end{lemma}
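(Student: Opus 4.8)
The plan is to recognize that, under the hypotheses $k_1=k_2=0$, the lone particle performs a pure rightward-jump process: it advances by $+\delta$ at the constant rate $r=v/\delta$, so the jump count $n^\delta$ is a Poisson process of rate $r$ and $X^\delta_t=\delta\,n^\delta_t$ (the distal removal is immaterial here — we regard $X^\delta$ as living on the unbounded lattice $\delta\mathbb{Z}_{\ge0}$). To make the almost-sure statement meaningful across the whole family $\{X^\delta\}_{\delta>0}$, I would fix one rate-one Poisson process $N$ on a single probability space and realize $n^\delta_t:=N(vt/\delta)$, so that $X^\delta_t=\delta\,N(vt/\delta)$ simultaneously for every $\delta>0$; set $\lambda:=1/\delta\to\infty$.

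For the uniform law of large numbers I would substitute $u=vs$ and write
\[
\sup_{s\le t}\bigl|X^\delta_s-vs\bigr| \;=\; \sup_{u\le vt}\Bigl|\tfrac{1}{\lambda}\,N(\lambda u)-u\Bigr|.
\]
This is exactly the functional strong law of large numbers for the Poisson process, $\lambda^{-1}N(\lambda\,\cdot)\to\mathrm{id}$ uniformly on compacts a.s.\ as $\lambda\to\infty$. I would deduce it from the scalar SLLN $N(m)/m\to1$ a.s.\ along integers by the standard monotone-sandwiching argument: the maps $u\mapsto\lambda^{-1}N(\lambda u)$ are nondecreasing and the limit $u\mapsto u$ is continuous, so pointwise a.s.\ convergence on a countable dense set forces locally uniform a.s.\ convergence (a Dini / Glivenko--Cantelli step); alternatively one cites the functional SLLN directly (e.g.\ Whitt's book).

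For the invariance principle I would center by the compensator. With $\tilde N(s):=N(s)-s$ the compensated, mean-zero, unit-rate Poisson martingale,
\[
\frac{1}{\sqrt\delta}\bigl(X^\delta_t-vt\bigr)
=\sqrt\delta\,\bigl(N(vt/\delta)-vt/\delta\bigr)
=\frac{1}{\sqrt\lambda}\,\tilde N(\lambda\,vt)=:M^\lambda_{vt}.
\]
It therefore suffices to prove $M^\lambda\Rightarrow W$ on the Skorokhod space $D[0,\infty)$, with $W$ a standard Brownian motion, and then to post-compose with the deterministic (continuous) time change $t\mapsto vt$ and invoke Brownian scaling, $(W_{vt})_{t\ge0}\stackrel{d}{=}(\sqrt v\,B_t)_{t\ge0}$. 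The convergence $M^\lambda\Rightarrow W$ is the functional CLT for the Poisson process, which I would obtain from the martingale central limit theorem (e.g.\ Billingsley, or Ethier--Kurtz, Chapter~7): $M^\lambda$ is a locally square-integrable martingale with deterministic predictable quadratic variation $\langle M^\lambda\rangle_t=t$ and maximal jump size $\lambda^{-1/2}\to0$, which are precisely the hypotheses yielding a continuous (Brownian) limit.

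No step here is genuinely hard; the two points that deserve care are (i) the coupling that places all the $X^\delta$ on one probability space, so that the asserted a.s.\ convergence even makes sense, and (ii) the upgrade from pointwise to locally uniform almost-sure convergence in the LLN step. The CLT assertion is essentially immediate once $\delta^{-1/2}(X^\delta-v\,\cdot)$ is identified, through its compensator, as a diffusively rescaled mean-zero Poisson martingale.
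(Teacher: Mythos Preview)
Your proposal is correct and follows essentially the same route as the paper: recognize that $X^\delta=\delta\,n^\delta$ with $n^\delta$ a rate-$v/\delta$ Poisson process, rewrite this as a diffusive rescaling of a single Poisson process, and invoke the functional LLN and functional CLT for Poisson. The paper's proof is terser---it simply cites the FLLN and FCLT after the time change $N_t:=n^\delta_{\delta t}$---whereas you add the explicit coupling that makes the a.s.\ statement well-posed and spell out the martingale-CLT verification, but the substance is the same.
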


\begin{proof}
Since $n^\delta$ is a Poisson
process with rate $r={v}/{\delta}$, defining $N_t:=n^\delta_{\delta  t}$ we get a Poisson
process $N$ with rate $v$, and we have $X^\delta_\cdot=\delta N_{\cdot/\delta}$. Our results then follow directly from
the functional law of large numbers (FLLN) and the functional central limit theorem (FCLT)
for the Poisson process $N$.
\end{proof}

\subsection{The on/off dynamics}
\label{subsec:particle-on-off}

We now consider a particle which undergoes transitions from
on-transport to off-transport state and back. Denote again by $X^\delta_t
\in \{0, \delta, 2 \delta, \ldots, L\}$ the lateral position of a
particle at time $t$ and let $n^\delta_t$ be the number of lateral
transition jumps made by the particle as of time $t$. Observe that
the particle will spend only a fraction of its time in active
transport and hence the lateral speed of the particle should be
slower than before.

A non-compartmental stochastic model for axonal transport, introduced by Brooks in \cite{B99}, is as follows. A particle can be in one of two states:
\begin{itemize}
\item[$\cdot$] an on-transport state with deterministic lateral velocity $v$, or
\item[$\cdot$] an off-transport state with lateral velocity 0.
\end{itemize}
We use a 2-dimensional Markov process to model the particle dynamics, where
\begin{itemize}
\item[$\cdot$] $X_t$ be the  lateral position of this particle at time $t$, 
\item[$\cdot$] $\xi_t$ be the indicator for whether it is on ($1$) or off ($0$) transport at time $t$. 
\end{itemize}
\begin{definition}\hspace{-1mm}\emph{(Stochastic non-compartmental model)}
Let $(X_t,\xi_t)_{t\ge 0}$ be a piecewise-linear Markov process with values in $(\mathbf{R}_+,\{0,1\})$ started at $(X_0,\xi_0)=(0,1)$ with the following dynamics:
\begin{itemize}
\item[$\cdot$]{(Switch from on-transport to off-transport)}\\
$(X_t,1) \to (X_t,0)$ at rate $k_1$
\item[$\cdot$]{(Switch from off-transport to on-transport)}\\
$(X_t,0) \to (X_t,1)$ rate $k_2$
\item[$\cdot$]{(Lateral travel)}\hspace{1mm}
$X_t=\int_0^t v\xi_sds$
\end{itemize}
\end{definition}
The path of $(X_t)_{t\ge 0}$ consists of alternating sequence of Exponential($k_2$) stretches of time where the lateral position increases linearly with speed $v$, and Exponential($k_1$) stretches of time where it remains constant.
\begin{proposition} \label{prop:on-off}
Let $k_2,k_1>0$, and $r = v/\delta>0$, then the position of the particle converges
\[(X^\delta_t)_{t\ge 0} \mathop{\Longrightarrow}\limits_{\delta\to0} (X_t)_{t\ge 0}\]
 in distribution on the Skorokhod space of cadlag functions.
\end{proposition}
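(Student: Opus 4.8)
The plan is to realize both processes in terms of the same driving randomness — namely the Poisson clocks governing the on/off switching — so that the convergence reduces to the active-transport case handled in Lemma \ref{prop:active-transport}, applied piecewise. First I would note that the switching component of $X^\delta$ is itself a two-state continuous-time Markov chain $\xi^\delta_t$ with rates $k_2$ (on to off) and $k_1$ (off to on) that does \emph{not} depend on $\delta$; indeed the lateral jumps at rate $r$ commute with the switching in the sense that they change only $X^\delta$, not the on/off label. Hence $\xi^\delta \equiv \xi$ for all $\delta$, where $\xi$ is exactly the switching process of the limiting model, and I may couple them to be identical. This gives a common sequence of random switching times $0 = \tau_0 < \tau_1 < \tau_2 < \cdots$, with alternating Exponential($k_2$) and Exponential($k_1$) holding times, and on the ``on'' intervals the limiting particle moves at speed $v$ while on the ``off'' intervals it is stationary.

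Next I would treat the compartmental position $X^\delta$ on each on-interval. On $[\tau_{2j}, \tau_{2j+1})$ the particle is in active transport, so conditionally on the switching times, $X^\delta$ increments by $\delta$ times a Poisson process of rate $r = v/\delta$ run for the elapsed time; by the functional law of large numbers for the Poisson process (exactly the first assertion of Lemma \ref{prop:active-transport}, or rather its increment form), $\sup_{s \le t}\bigl|X^\delta_s - X_s\bigr| \to 0$ almost surely on each such interval, and on the off-intervals $X^\delta$ does not change at all, matching $X$. Summing the errors over the finitely many switching intervals that occur before a fixed time $t$ — there are a.s.\ finitely many since the switching is a well-behaved Markov chain — gives $\sup_{s\le t}\bigl|X^\delta_s - X_s\bigr| \to 0$ almost surely, which is more than enough for convergence in distribution on Skorokhod space (in fact it yields convergence in the uniform topology on $[0,t]$ for every $t$, noting the limit $X$ is continuous so uniform and Skorokhod convergence coincide).

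To make the ``summing over intervals'' step fully rigorous, the key point to nail down is a \emph{uniform} control on the Poisson fluctuations: I need that $\max_j \sup_{s \in [\tau_{2j},\,\tau_{2j+1}\wedge t]} |X^\delta_s - X_s| \to 0$, where the max is over the random number $J_t$ of on-intervals before time $t$. This is where the main work lies. One clean way: condition on the entire switching path $\xi$ (equivalently on $(\tau_j)$ and $J_t$, which are $\delta$-independent), and on that event use the explicit exponential tail bound for the deviation of a rate-$v/\delta$ Poisson process from its mean over a time interval of length at most $t$ — e.g.\ $\mathbb{P}\{\sup_{s\le u}|\delta\,\mathrm{Pois}_{v s/\delta} - vs| > \eta\} \le C(t,\eta)\exp(-c(t,\eta)/\delta)$ uniformly for $u \le t$ — together with a union bound over the $J_t$ intervals and the fact that $\mathbb{E}[J_t] < \infty$ with good tails. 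Since the per-interval failure probability is exponentially small in $\delta$ while $J_t$ has at worst Poisson-type tails, a Borel–Cantelli argument along a sequence $\delta_n \to 0$ gives the almost-sure uniform convergence; the general $\delta \to 0$ statement then follows by monotonicity of the bounds in $\delta$. I expect this uniform-in-$j$ Poisson concentration estimate, and its interaction with the random number of intervals, to be the only genuinely delicate point; everything else is bookkeeping built on Lemma \ref{prop:active-transport}.
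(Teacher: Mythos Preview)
Your approach is correct and in fact more direct than the paper's. Both proofs begin with the same key observation --- that the on/off indicator $\xi^\delta$ is a two-state chain with $\delta$-independent rates $k_1,k_2$, so one may couple $\xi^\delta \equiv \xi$ --- but from there the paper takes the longer route of establishing convergence of one-dimensional marginals (conditioning on the number and times of switches before $t$ and invoking Lemma~\ref{prop:active-transport} on the last on-interval), extending to finite-dimensional distributions via the Markov property and stationarity of increments, and then separately verifying tightness through a modulus-of-continuity criterion (Theorem~16.11 of \cite{Kal}). Your coupling argument instead carries the almost-sure uniform convergence on compacts all the way through, which is strictly stronger and makes the tightness step unnecessary.

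One simplification worth noting: the uniformity over the random number $J_t$ of on-intervals that you flag as the delicate point can be sidestepped entirely. Rather than running an independent Poisson clock on each on-interval and union-bounding, realize all lateral jumps as a \emph{single} unit-rate Poisson process $M$ time-changed by the cumulative on-time: under the coupling $\xi^\delta=\xi$ one has $X^\delta_t = \delta\, M\!\bigl((v/\delta)\int_0^t \xi_s\,ds\bigr)$ and $X_t = v\int_0^t \xi_s\,ds$. Since $\int_0^t \xi_s\,ds \le t$, a single application of the functional law of large numbers for $M$ on $[0,vt]$ (exactly Lemma~\ref{prop:active-transport}) gives $\sup_{s\le t}|X^\delta_s - X_s| \le \sup_{u\le vt}|\delta M(u/\delta)-u| \to 0$ a.s., with no concentration bounds, union bound, or Borel--Cantelli required.
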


\begin{proof}
If we let $\xi^\delta$ be the indicator of whether the particle in the compartmental model is on ($\xi^\delta=1$) or off ($\xi^\delta=0$) transport, then $(X^\delta_t,\xi^\delta_t)_{t\ge 0}$ is a strong Markov process. We will see that $\xi^\delta$ is a continuous-time Markov chain  on $\{0,1\}$ (independent of $\delta$), and conditionally on $\xi^\delta$ the transition law of $X$ is easily expressed. Likewise, for the non-compartmental model above, $(X_t, \xi_t)_{t\ge 0}$ is a strong Markov process, with $\xi$ the same continuous-time Markov chain on $\{0,1\}$ as $\xi^\delta$, and conditionally on $\xi$ the change in $X$ is easily given in terms of its linear speed and $\xi$.

We will start by showing that for any $t>0$, $(X^\delta_t,\xi^\delta_t)$ converges to $(X_t,\xi_t)$ in distribution as $\delta\to 0$. We then show that the finite dimensional distributions of $(X^\delta, \xi^\delta)$ converge to those of $(X,\xi)$. A tightness argument finally implies (Lemma 16.2 and Theorem 16.3 \cite{Kal}) that $(X^\delta, \xi^\delta)$ converges to $(X,\xi)$ in distribution on the Skorokhod space of cadlag processes. 

Suppose that initially the particle is on transport at $x_0$, so $(X^\delta_0,\xi^\delta_0)=(x_0,1)$. The first time $\tau_1=\inf\{t>0: \xi^\delta_t=0\}$ at which the particle steps off transport has Exponential($k_2$) distribution, irrespective of $\delta$. The first subsequent increment in time $\sigma_1=\inf\{t>0: \xi^\delta_{\tau_1+t}=0\}$ after which the particle steps back on transport has Exponential($k_1$) distribution, irrespective of $\delta$ as well. This is repeated, and $\xi^\delta$ is a simple continuous-time Markov chain  on $\{0,1\}$ with transition rates $k_2$ and $k_1$, from $1\to 0$ and $0 \to 1$, respectively.

Until time $\tau_1$ the particle behaves as if it were in active transport ($k_2=k_1=0$) and conditionally on the value of $\tau_1$, for any $0\le s\le \tau_1$ the lateral change in position over time $s$, $X^\delta_s-X^\delta_0$, is $\delta n^\delta_s$ where $n^\delta_s$ is a Poisson($rs$) random variable. Moreover, by results of Proposition~\ref{prop:active-transport}, conditionally on the value of $\tau_1$, we have
\[\sup\limits_{0\leq s\leq \tau_1}\big|(X^\delta_s-X^\delta_0)-vs\big|\;\; \mathop{\longrightarrow}\limits_{\delta\to 0}\;\; 0 \; \; \mbox{ a.s.}\]
To get the unconditioned law of $X^\delta_{\tau_1}-X^\delta_0$, we observe that the number of boxes the particle traverses before it steps off $n^\delta_{\tau_1}$ has a Geometric$({k_2}/{(k_2+r)})$ distribution (note that a Poisson rate $r$  process  sampled at an Exponential($k_2$) time independent of the process has this distribution). Since ${k_2}/{\delta(k_2+r)}\mathop{\rightarrow}{k_2}/{v}$, as $\delta\to 0$, $X^\delta_{\tau_1}-X^\delta_0=\delta n^\delta_{\tau_1}$ converges in distribution to $\mbox{Exponential}({k_2}/{v})$ variable. 

Consider now the particle in the non-compartmental model. It is immediate from the definition of the model that $\xi$ has the same law of a continuous-time Markov chain  on $\{0,1\}$ with transition rates $k_2$ and $k_1$, from $1\to 0$ and $0 \to 1$, as $\xi^\delta$. Since we prove our convergence in law results by first conditioning on the values of $\xi^\delta$ and $\xi$, we can without loss of generality henceforth assume $\xi^\delta=\xi$ a.s., and drop its superscript.

Suppose that initially the particle in the non-compartmental model is on transport at $x_0$, so $(X_0,\xi_0)=(x_0,1)$. Conditionally on $\tau_1$, for all $0\le s\le\tau_1$, $X_s-X_0=vs$ and $X_{\tau_1}-X_0= v\tau_1$, hence unconditionally, $X_{\tau_1}-X_0$ is an $\mbox{Exponential}({k_2}/{v})$ random variable. We now have both $\sup_{0\leq s\leq \tau_1}\big|(X^\delta_s-X^\delta_0)-(X_s-X_0)\big|\to 0$ a.s. and $X^\delta_{\tau_1}-X^\delta_0\Rightarrow X_{\tau_1}-X_0$.

Between times $\tau_1$ and $\sigma_1$ the particle in both models stays in place, so conditionally on values of $\tau_1,\sigma_1$, and of $ X^\delta_{\tau_1},X_{\tau_1}$, $\sup_{\tau_1\leq s\leq \tau_1+\sigma_1}\big|(X^\delta_s-X^\delta_{\tau_1})-(X_s-X_{\tau_1})\big|\equiv 0$, and $X^\delta_{\tau_1+\sigma_1}-X^\delta_0\Rightarrow X_{\tau_1+\sigma_1}-X_0$.
At time $\tau_1+\sigma_1$, the same process starts over from initial values $(X^\delta_{\tau_1},1)$ and $(X_{\tau_1},1)$ in the compartmental and non-compartmental model, respectively.

Let $\tau_1, \sigma_1, \tau_2, \dots,$ be the sequence of time increments between the consecutive times when the particle in both models gets off and gets back on transport, let $\sigma_0=0$ and for $i\ge 1$
\[\tau_i=\inf\{t>0: \xi_{\sigma_{_{i-1}}+t}=0\}, \quad \sigma_i=\inf\{t>0: \xi_{\tau_{i}+t}=1\}\]
Then $(\tau_i)_{i\ge 1}$ and $(\sigma_i)_{i\ge 1}$ are independent sequences of i.i.d. Exponential($k_2$) and  Exponential($k_1$) variables, respectively. For any $t>0$, let $\eta_t$ be the number of times the particle in either model gets back on transport  until time $t$, and $\eta'_t$ the number of times it gets off,
\[\eta_t=\inf\{k\ge 0:\sum_{i=1}^k (\tau_i+\sigma_i)\le t\},\quad \eta'_t=\inf\{k'\ge 0:\sum_{i=1}^{k'} \tau_i+\sum_{i=1}^{k'-1}\sigma_i\le t\}\] 
Note that, $\eta'_t=\eta_t\mbox{ iff }\xi_t=1$, and $\eta'_t=\eta_t+1\mbox{ iff }\xi_t=0$. 
Let $\tilde{\tau}_t$ be the last time before time $t$ that the particle changed whether it was on or off transport, that is, $\tau_t=\sup\{0\le s\le t: \xi_{s^-}\neq \xi_s\}$. Then, we have
\[\tau_t=\Bigg\{\begin{tabular}{ll} $\sum\limits_{i=1}^{\eta_t}(\tau_i+\sigma_i)$ & if $\eta'_t=\eta_t,$\\ $\sum\limits_{i=1}^{\eta'_t}\tau_i+\sum\limits_{i=1}^{\eta_t}\sigma_i$& if $\eta'_t=\eta_t+1$.\end{tabular}\]

If $\eta'_t=\eta_t$, then from time $\tau_t$ to $t$ the particle is in active transport, and the same convergence argument as before implies that conditionally on the values of $\tau_t$ and $X^\delta_{\tau_t}$,
\[\sup_{\tau_t\leq s\leq t}|(X^\delta_s-X^\delta_{\tau_t})-v(s-\tau_t)|\to 0\mbox{ a.s.}\] 
Also $X^\delta_{\tau_t}-X^\delta_0=\delta n^\delta_{\tau_t}$ where $n^\delta_{\tau_t}$ is the number of boxes the particle traverses by time $t$. Conditionally on the value of $\eta_t$, $n^\delta_{\tau_t}$  is a sum of $\eta_t$ i.i.d. Geometric$({k_2}/{(k_2+r)})$ random variables, hence $X^\delta_{\tau_t}-X^\delta_0$ converges in distribution to a sum of $\eta_t$ i.i.d. Exponential($k_2/v)$ random variables. In the non-compartmental model, if $\eta'_t=\eta_t$, then conditionally on the values of  $\eta_t$ and $\tau_t$, for $\tau_t\leq s\leq t$, $X_s-X_{\tau_t}= v(s-\tau_t)$ and conditionally only on the value of $\eta_t$, $X^\delta_{\tau_t}-X_0$ is a sum of $\eta_t$ i.i.d. Exponential($k_2/v)$ variables. Hence, conditionally on $\eta_t$ and $\tau_t$, $\sup_{\tau_t\leq s\leq t}\big|(X^\delta_s-X^\delta_{\tau_t})-(X_s-X_{\tau_t})\big|\to 0$ a.s.,  and conditionally only on $\eta_t$, $X^\delta_{\tau_t}-X^\delta_0\Rightarrow X_{\tau_t}-X_0$.

If $\eta'_t=\eta_t+1$, then from time $\tau_t$ to $t$ the particle is in both models stays in place, so conditionally on $\eta_t$ and $\tau_t$, $\sup_{\tau_t\leq s\leq t}\big|(X^\delta_s-X^\delta_{\tau_t}) -(X_s-X_{\tau_t})\big|=0\mbox{  a.s.}$. Also, conditionally only on the value of $\eta'_t$, $n^\delta_{\tau_t}$ is a sum of $\eta'_t$ i.i.d. Geometric$({k_2}/{(k_2+r)})$ variables, hence $X^\delta_{\tau_t}-X^\delta_0$ converges in distribution to a sum of $\eta'_t$ i.i.d. Exponential($k_2/v)$ variables.  
In the non-compartmental model, if $\eta'_t=\eta_t+1$, conditionally only on the value of  $\eta'_t$, $X^\delta_{\tau_t}$ is a sum of $\eta'_t$ i.i.d. Exponential($k_2/v)$ variables. Hence, conditionally only on $\eta'_t$, $X^\delta_{\tau_t}-X^\delta_0\Rightarrow X_{\tau_t}-X_0$.

Now, integrating over the possible values of $\eta_t$,$\eta'_t$ and $\tau_t$, we get that for any $t\ge 0$, $(X^\delta_t,\xi^\delta_t)\Rightarrow(X_t,\xi_t)$. Convergence of finite dimensional distributions follows from an iterative use of the Markov property of $(X^\delta,\xi^\delta)$ and $(X,\xi)$, and the fact that the increments of both $(X^\delta,\xi^\delta)$ and $(X,\xi)$ are stationary.

In order to verify tightness, Theorem 16.11 \cite{Kal}, of the sequence of Markov processes $\{(X^\delta,\xi^\delta)\}_{\delta >0}$, because $(X^\delta,\xi^\delta)$ has stationary increments and is strong Markov,  it will suffice to check that for any $\epsilon>0$
\[\lim_{h\to 0}\limsup_{\delta\to 0}\p{||(X^\delta_h,\xi^\delta_h)-(X^\delta_0,\xi^\delta_0)||>\epsilon}= 0\]
where $||(x_1,\xi_1)-(x_2,\xi_2)||=|x_1-x_2|+|\xi_1-\xi_2|$ is a distance metric on $\mathbf{R}_+\times\{0,1\}$. 
For any $\delta>0$, the first change in the continuous-time Markov chain $\xi^\delta$ happens after an Exponential($k$)  time (where $k=k_2$ or $k=k_1$ depending on whether $\xi^\delta_0=1$ or $\xi^\delta_0=0$), and is independent of $\delta$. Hence, at time $h$ later, $\p{\xi^\delta_h\neq  \xi^\delta_0}\le 1-e^{-kh}$. Moreover, irrespective of the value of $\xi^\delta$, at time $h$ later the value of $|X^\delta_h-X^\delta_0|\le \delta n^\delta_h$ where $n^\delta$ is a Poisson process with rate $r=v/\delta$. Hence,  $\p{|X^\delta_h-X^\delta_0|>\epsilon}\le \delta \E{n^\delta_h}/\epsilon=vh/\epsilon$. 
Combining the two gives \[\p{||(X^\delta_h,\xi^\delta_h)-(X^\delta_0,\xi^\delta_0)||>\epsilon}\le 1-e^{-kh}+vh/\epsilon \mbox{ for any }\delta>0,\] and the desired limit follows.
\end{proof}

\

The process $(X^\delta,\xi^\delta):t\in[0,\infty)\mapsto(X^\delta_t,\xi^\delta_t)\in\delta\mathbb{Z}_+ \times \{0,1\}$  is a Markov process with cadlag paths whose generator is given by 
\begin{align*}A^\delta f(x,\xi)&=r\xi\big[f(x+\delta,\xi)-f(x,\xi)\big]\\& + k_2\xi\big[f(x,\xi-1)-f(x,\xi)\big]+ k_1(1-\xi)\big[f(x,\xi+1)-f(x,\xi)\big]\end{align*}
for all  $f\in\mathcal{D}(A^\delta)=\mathcal{C}^0(\delta\mathbb{Z}_+ \times \{0,1\})$.

 The piecewise linear process $(X,\xi):t\in[0,\infty)\mapsto(X_t,\xi_t)\in\mathbb{R}_+ \times \{0,1\}$ is a Markov process with continuous paths whose generator is the closure of the operator
\begin{align*}Af(x,\xi)&={v}\mathbf\xi\partial_x f(x,\xi)\\& + k_2\xi\big[f(x,\xi-1)-f(x,\xi)\big]+k_1(1-\xi)\big[f(x,\xi+1)-f(x,\xi)\big]\end{align*}
for all  $f\in\mathcal{D}(A^\delta)=\mathcal{C}^{1,0}(\mathbb{R}_+ \times \{0,1\})$.

Letting $\iota^\delta:\delta\mathbb{Z}_+\times \{0,1\}\mapsto\mathbb{R}_+\times \{0,1\}$ be an embedding, and $f^\delta=f\circ\iota^\delta$, then $A^\delta f^\delta\to Af$ as $\delta\to 0$ for all $f\in\mathcal{C}^{1,0}(\mathbb{R}_+ \times \{0,1\})$ imply that the finite dimensional distributions of $(X^\delta,\xi^\delta)$ converge to those of $(X,\xi)$. Verification of additional conditions,  see Theorem 19.25 of \cite{Kal}, would also imply convergence of processes with generators $\{\big(A^\delta,\mathcal{D}(A^\delta)\big)\}_{\delta>0}$ to the process with generator $\big(A,\mathcal{D}(A^\delta)\big)$, however, we thought this way of showing convergence in law was not as instructive.

\

The fact that an individual particle will have the distribution given by Proposition~\ref{prop:on-off} as the size of the boxes decreases means that our model is a microscopic version of the stochastic model used by Brooks \cite{B99}, and that the hydrodynamic limit of our model as $\delta\to 0$ is equal to the macroscopic stochastic model from \cite{B99}. 

An approximation of the particle's position $X_t$ is obtained in \cite{B99} to be $X_t\approx\mu t +\sqrt{t}\sigma Z$ as $t\to\infty$, where $\mu=k_1v/(k_2+k_1), \sigma=2k_2k_1v^2/(k_2+k_1)^3$  and $Z$ is a standard Normal variable. That approximation is valid only for large fixed values of $t$, while we next extend this result to give an approximation for the whole time trajectory of the particle's path. This is accomplished by the following  functional central limit theorem for the position of the particle undergoing stochastic transport.

\begin{proposition}\label{prop:wavefront} 
Let $X$ be the position of a particle following the piecewise linear Markov process from Proposition~\ref{prop:on-off} started at  $X_0=0$ on transport, then
\[\sup\limits_{s\leq t}\big|\frac{X_{ns}}{n}-\frac{k_1}{k_2+k_1}vs\big|\;\; \mathop{\longrightarrow}\limits_{n\to \infty}\;\; 0 \; \; \mbox{ a.s.} \;\; \forall t>0\] 
\mbox{and}, if  $B$  denotes a standard Brownian motion
\[\sqrt{n}\Big(\frac{X_{nt}}{n}-\frac{k_1v}{k_2+k_1}t\Big)_{t\geq 0}\;\;
\mathop{\Longrightarrow}\limits_{n \to \infty}\;\; \sqrt{\frac{2k_2k_1v^2}{(k_2+k_1)^3}} \big(B_t\big)_{t\geq
0}\]
in distribution on the space of continuous functions.
\end{proposition}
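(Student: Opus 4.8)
The plan is to recognize $X$ as a cumulative (renewal--reward) process built on the regenerative structure already isolated in the proof of Proposition~\ref{prop:on-off}, and then to push the classical strong law and Donsker-type functional limit theorems through a random time change. Recall that $X_t=\int_0^t v\xi_s\,ds$, where $\xi$ spends an $\mathrm{Exponential}(k_2)$ stretch $\tau_i$ on transport followed by an $\mathrm{Exponential}(k_1)$ stretch $\sigma_i$ off transport, the sequences $(\tau_i)$ and $(\sigma_i)$ being independent and i.i.d.; since the particle starts on transport, the first epoch is already a genuine cycle. Over the $i$-th cycle the particle accrues displacement $D_i:=v\tau_i$ in elapsed time $T_i:=\tau_i+\sigma_i$, and the pairs $(T_i,D_i)$ are i.i.d.\ with all moments finite and
\[
\mu_T:=\mathbb{E}T_1=\frac{k_1+k_2}{k_1k_2},\qquad \mu_D:=\mathbb{E}D_1=\frac{v}{k_2},\qquad c:=\frac{\mu_D}{\mu_T}=\frac{k_1v}{k_1+k_2}.
\]
Writing $\eta_t$ for the number of completed cycles by time $t$ (the renewal counting process from the proof of Proposition~\ref{prop:on-off}), one has the elementary sandwich $\sum_{i\le\eta_t}D_i\le X_t\le\sum_{i\le\eta_t}D_i+v\tau_{\eta_t+1}$ and $\sum_{i\le\eta_t}T_i\le t<\sum_{i\le\eta_t+1}T_i$.

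For the functional law of large numbers, I would first note that, equivalently by the renewal--reward strong law or by the ergodic theorem for the finite-state chain $\xi$ (whose stationary weight on the on-transport state is $k_1/(k_1+k_2)$), $X_t/t\to c$ a.s.; applied along $t=ns$ this gives $X_{ns}/n\to cs$ a.s.\ for each fixed $s$. To upgrade this to a.s.\ uniform convergence on $[0,t]$, I would use that $s\mapsto X_{ns}/n$ is nondecreasing for every $n$ and that the limit $s\mapsto cs$ is continuous and nondecreasing: pointwise convergence on a countable dense set then forces uniform convergence on compacts by the standard P\'olya-type argument for monotone functions.

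For the functional central limit theorem the key object is the centered per-cycle reward $Y_i:=D_i-cT_i=\tfrac{vk_2}{k_1+k_2}\tau_i-\tfrac{vk_1}{k_1+k_2}\sigma_i$, which is i.i.d.\ with mean $0$ and, by independence of $\tau_i$ and $\sigma_i$,
\[
\mathrm{Var}(Y_1)=\Big(\tfrac{vk_2}{k_1+k_2}\Big)^{2}\tfrac1{k_2^2}+\Big(\tfrac{vk_1}{k_1+k_2}\Big)^{2}\tfrac1{k_1^2}=\frac{2v^2}{(k_1+k_2)^2}.
\]
Combining the two sandwiches above yields the exact identity $X_{nt}-c\,nt=\sum_{i=1}^{\eta_{nt}}Y_i+\mathcal{E}_{nt}$ with $|\mathcal{E}_{nt}|\le(v\vee c)(\tau_{\eta_{nt}+1}+\sigma_{\eta_{nt}+1})$; since $\eta_{nt}=O(n)$ uniformly on compacts and $\tau_i,\sigma_i$ have exponential tails, $\sup_{t\le T}|\mathcal{E}_{nt}|=O(\log n)$ a.s.\ by a Borel--Cantelli estimate, so $n^{-1/2}\mathcal{E}_{n\cdot}\to0$ uniformly on compacts. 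Donsker's theorem gives $n^{-1/2}\sum_{i\le\lfloor nu\rfloor}Y_i\Rightarrow \mathrm{Var}(Y_1)^{1/2}W_u$ in the Skorokhod space, while $\eta_{n\cdot}/n\to(\cdot)/\mu_T$ a.s.\ uniformly on compacts, with the limiting time change continuous and strictly increasing. Composing these via the random time change / continuous mapping theorem (see \cite{Kal}) yields
\[
\frac{X_{nt}-c\,nt}{\sqrt n}\;\Longrightarrow\;\mathrm{Var}(Y_1)^{1/2}W_{t/\mu_T}\;\stackrel{d}{=}\;\sqrt{\frac{\mathrm{Var}(Y_1)}{\mu_T}}\,B_t,
\]
and $\mathrm{Var}(Y_1)/\mu_T=\tfrac{2v^2}{(k_1+k_2)^2}\cdot\tfrac{k_1k_2}{k_1+k_2}=\tfrac{2k_1k_2v^2}{(k_1+k_2)^3}$ is exactly the claimed diffusion coefficient. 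Since the limit has continuous paths and $\mathcal{E}_{n\cdot}$ is uniformly negligible, $J_1$-convergence forces uniform convergence on compacts, so the limit in fact holds in $C[0,\infty)$ as stated.

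I expect the main obstacle to be the clean execution of the random time change: one must record the joint convergence of $n^{-1/2}\sum_{i\le\lfloor n\cdot\rfloor}Y_i$ and $\eta_{n\cdot}/n$ (here essentially free, since the renewal process has a deterministic fluid limit), invoke the composition theorem on Skorokhod space, and verify that the last, incomplete cycle contributes nothing in the limit. The part most prone to error is the bookkeeping of the correlation between $D_i$ and $T_i$, which is why the argument is organized around $Y_i=D_i-cT_i$ rather than around $D_i$ directly. A self-contained alternative, avoiding the time change altogether, is to treat $v\xi$ as a bounded functional of the ergodic two-state Markov chain $\xi$ and to derive both assertions from the ergodic theorem and the central limit theorem for additive functionals of finite-state chains, computing the asymptotic variance from the associated Poisson equation; this produces the same constant but is less transparent than the regenerative argument above.
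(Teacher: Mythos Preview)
Your argument is correct, but it proceeds along a genuinely different route from the paper. You exploit the regenerative structure of $(X,\xi)$: decomposing into i.i.d.\ cycles $(T_i,D_i)$, centering to the reward $Y_i=D_i-cT_i$, applying Donsker to the partial sums, and then composing with the deterministic fluid limit $\eta_{n\cdot}/n\to(\cdot)/\mu_T$ via the random time change theorem; the boundary term from the incomplete cycle is handled by a Borel--Cantelli bound. The variance calculus lands on $\mathrm{Var}(Y_1)/\mu_T=2k_1k_2v^2/(k_1+k_2)^3$, matching the claim. For the FLLN you use monotonicity of $X$ and a P\'olya-type upgrade from pointwise to uniform convergence.

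The paper instead works at the level of generators and invokes Kurtz's stochastic averaging theorem (Theorem~2.1 of \cite{k92}). It constructs an explicit corrector $h(\xi)$ solving the discrete Poisson equation $\lambda(\xi)[h(s(\xi))-h(\xi)]=-(v\xi-c)$, sets $f^n(x,\xi)=f(x)+n^{-1/2}h(\xi)\partial_xf(x)$, and checks that $\bar A^nf^n=\bar Af$ for $\bar A=\tfrac{k_1k_2v^2}{(k_1+k_2)^3}\partial_x^2$, together with compact containment via Doob's inequality on the associated martingale. This is precisely the ``additive-functional CLT via the Poisson equation'' that you mention as an alternative at the end of your proposal. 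What your regenerative approach buys is transparency of the constants and minimal machinery---everything reduces to Donsker plus a continuous mapping---while the paper's generator/corrector approach generalizes immediately to $\xi$ taking more than two states (or to non-regenerative driving chains) without having to identify an i.i.d.\ cycle decomposition.
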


\

\begin{proof} For these results we use the notion of stochastic averaging \cite{Kh66a}\cite{Kh66b}\cite{k92}. 
Note that the indicator process $\xi$ for being on- or off- transport is independent of the position $X$ of the particle. Hence, the position of the particle $X$ is a linear random evolution process, Ch 12 of \cite{ek86}, \cite{GH69}, driven by the independent indicator process $\xi$. The generator of $(X,\xi)$ is the closure of the operator
\[Af(x,\xi)=\sigma(\xi)\partial_x f(x,\xi) +\lambda(\xi)\big[f(x, s(\xi))-f(x,\xi)\big]\]
for all  $f\in\mathcal{D}(A)=\mathcal{C}^{1,0}_0(\mathbb{R}_+ \times \{0,1\})$ (the space of all continuously differentiable functions in $x$ continuous in $\xi$ and vanishing at infinity),
where \[\sigma(\xi)=v\xi, \;\; \lambda(\xi)=k_2\xi +k_1(1-\xi),\; \mbox{ and }\;s(\xi)=1-\xi\] (when $\xi=1$: $\sigma=v, \lambda=k_2, s=0$ and when  $\xi=0$: $\sigma=0, \lambda=k_1, s=1$).
In other words, if $(X_0, \xi_0)=(0,1)$ we have
\[X_t=\int_0^t v\xi_s ds, \quad \xi_t=\frac{1}{2}\big(1+(-1)^{Y(\int_0^t \lambda(\xi_s)ds)}\big)\]
where $Y$ is a rate $1$ Poisson process, and $Y(\int_0^t \lambda(\xi_s)ds)$ is a counting process of the number of switches of $\xi$ until time $t$.

Rescaling time and position of the process by $1/n$, we get that $(\frac{X_{n\cdot}}{n}, \xi_{n\cdot})$ satisfies
\[\frac{X_{nt}}{n}=\int_0^t v\xi_{ns} ds, \quad \xi_{nt}=\frac{1}{2}\big(1+(-1)^{Y(n \int_0^t \lambda(\xi_{ns})ds)}\big)\]
and its generator is
\[A^nf(x,\xi)=v\xi\partial_x f(x,\xi) +n(k_2\xi+(k_1(1-\xi)))\big[f(x, s(\xi))-f(x,\xi)\big]\]
for all  $f\in\mathcal{D}(A^n)=\mathcal{C}^{1,0}_0(\mathbb{R}_+ \times \{0,1\})$. 
Note that $\xi_{n\cdot}$ switches at rate proportional to $n$, forming an ergodic Markov chain with stationary  distribution $\pi(1)={k_1}/{(k_2+k_1)}, \pi(0)={k_2}/{(k_2+k_1)}$, and $\int \sigma(\xi)\pi(\xi)=v\int\xi\pi(\xi)=v{k_1}/{(k_2+k_1)}$. Hence, the strong ergodic theorem implies that
 \[\frac{X_n}{n}=\frac1n\int_0^{n}v\xi_s ds\;\mathop{\longrightarrow}\limits_{ n\to \infty}\;\frac{vk_1}{k_2+k_1} \; \mbox{ a.s.}\] 
We can extend this to a functional statement on any finite time interval $[0,t]$. 
Fix $t>0$ and take any $\Delta>0$, then there exists $n_\Delta<\infty$ a.s. such that \[\big|\frac{X_n}{n}-\frac{vk_1}{k_2+k_1} \big|<\frac\Delta t,\quad  \mbox{  for }\forall n>n_\Delta\]
Now, let $M=\sup_{n\le n_\Delta} |X_n-k_1/(k_2+k_1) vn|$, which is finite a.s. since $n_\Delta<\infty$ a.s. Let $n>M/\Delta$. Then, for any $0\le s\le t$ we have that either $ns>n_\Delta$ in which case 
\[\big|\frac{X_{ns}}{n}-\frac{vk_1}{k_2+k_1}s \big|=\big|\Big(\frac{X_{ns}}{ns}-\frac{vk_1}{k_2+k_1}\Big)s \big|<\frac\Delta t s\leq \Delta\]
or, $ns\le n_\Delta$ in which case 
\[\big|\frac{X_{ns}}{n}-\frac{vk_1}{k_2+k_1}s \big|=\frac 1n\big|X_{ns}-\frac{vk_1}{k_2+k_1}ns \big|<\frac M n< \Delta\]
implying that we have $\sup_{0\le s\le t}\big|\frac1n{X_{ns}}-{vk_1}/{(k_2+k_1)}s \big|<\Delta$ whenever $ n>M/\Delta$. 
\\

Once we rescale the position for the particle by $1/\sqrt{n}$ and time by $1/n$,  $\xi$ still changes at a much faster rate than the position of the particle $X$. The generator of the rescaled centered process $(X^n,\xi^n)$ defined as
\[X^{n}_t:=\sqrt{n}\big(\frac{X_{nt}}{n}-\frac{k_1v}{k_2+k_1}t\big), \quad \xi^n_t:=\xi_{nt}\]  
is the closure of the operator
\[\bar{A}^{n}f(x,\xi)=\big(\sqrt{n}\sigma(\xi)-\frac{k_1v}{k_2+k_1}\big)\partial_x f(x,\xi) +n\lambda(\xi)\big(f(x, s(\xi))-f(x,\xi)\big)\]
on $f\in \mathcal{D}(A^n)=\mathcal{C}^{1,0}_0(\Bbb R\times\{0,1\})$. We will use the stochastic averaging theorem (Theorem 2.1 of \cite{k92}) to show that the paths of the centered rescaled process converge in distribution to paths of a Brownian motion with a diffusion coefficient equal to $2k_2k_1/(k_2+k_3)^3v^2$.

Let $h(\xi)$ be the function 
\[h(\xi)=v\frac{k_2k_1}{(k_2+k_1)^3}\frac{1}{\lambda(\xi)}=v\frac{k_2k_1}{(k_2+k_1)^3}\frac{1}{k_2\xi+k_1(1-\xi)}\]
Then $h(1)=v{k_1}/{(k_2+k_1)^2}$, $h(0)=-v{k_2}/{(k_2+k_1)^2}$ imply that $h(s(1))-h(1)=-{v}/{(k_2+k_1)}$, $h(s(0))-h(0)={v}/{(k_2+k_1)}$, which in turn imply that $\lambda(1)\big(h(s(1))-h(1)\big)=-{vk_2}/{(k_2+k_1)}$, $\lambda(0)\big(h(s(0))-h(0)\big)={vk_1}/{(k_2+k_1)}$, so that  for $\xi\in\{0,1\}$
\[\lambda(\xi)\big(h(s(\xi))-h(\xi)\big)=-\big(v\xi-v\frac{k_1}{k_2+k_1}\big)\]
Now, for any $f\in\mathcal{C}^{2}_0(\Bbb R)$ define a sequence of functions $f^{n}\in\mathcal{C}^{1,0}_0(\Bbb R\times\{0,1\})$ by 
\[f^{n}(x,\xi)=f(x)+\frac{1}{\sqrt{n}}h(\xi)\partial_xf(x)\] 
Then $f^{n}\to f$ as $n\to\infty$ and 
\begin{align*}
\bar{A}^{n}f^{n}(x,\xi)\;=&\;\sqrt{n}\big(v\xi -v\frac{k_1}{k_2+k_1}\big)\partial_xf(x)+\big(v\xi -v\frac{k_1}{k_2+k_1}\big)h(\xi)\partial^2_xf(x)\\&\;+\sqrt{n}\lambda(\xi)\big(h(s(\xi))-h(\xi)\big)\partial_xf(x)\\
\;=&\;\big(v\xi -v\frac{k_1}{k_2+k_1}\big)h(\xi)\partial^2_xf(x)
\;=\;\bar{A}f(x)
\end{align*}
where $\bar{A}$ is defined on $\mathcal{D}(\bar{A})=\mathcal{C}^{2}_0(\Bbb R)$ by
\[\bar{A}f(x)=\frac{k_2k_1v^2}{(k_2+k_1)^3}\partial^2_xf(x)\]
Define a sequence of processes \[\ep^{f,n}_t=\frac{1}{\sqrt{n}}h(\xi^n_t)\partial_xf(X^n_t)=f^n(X^n_t,\xi^n_t)-f(X^n_t)\] Then our earlier calculation implies that for any $f\in\mathcal{D}(\bar{A})$ 
\begin{align*}f(X^n_t)-\int_0^t \bar{A}f(X^n_s)ds +\ep^{f,n}_t=f^n(X^n_t,\xi^n_t)-\int_0^t A^nf(X^n_s,\xi^n_s)ds\end{align*}
is a sequence of martingales. Since $f\in\mathcal{C}^{2}_0(\Bbb R)$, $\xi^n_t\in\{0,1\}$ it is clear that
\[\sup_{n}\E{\int_0^t\big|\bar{A}f(X^n_s)\big|^2ds}<\infty \;\mbox{ and}\quad \E{\sup_{s\le t}\big|\ep^{f,n}_s\big|}\mathop{\longrightarrow}\limits_{n\to\infty} 0\]
In order to apply Theorem 2.1 of \cite{k92} on stochastic averaging it is only left to show the process $X^n$ satisfies the compact containment condition, that is for any $t>0$ and $\Delta>0$ there exists a compact set $K\subset\Bbb R$ such that
\[\inf_{n}\p{X^n_s\in K \;\forall s\le t }\ge 1-\Delta\]
This follows from the fact that $X^n_t+h(\xi^n_t)/\sqrt{n}$ is a sequence of martingales (let $f(x)=x$) with mean 
\[\E{X^n_t+\frac{h(\xi^n_t)}{\sqrt{n}}}=X^n_0+\frac{h(\xi^n_0)}{\sqrt{n}}=\frac{k_2k_1}{(k_2+k_1)^3}\frac{v^2}{\sqrt{n}}\]
 and second moment  (let $f(x)=x^2$)
 \[\E{\Big(X^n_t+\frac{h(\xi^n_t)}{\sqrt{n}}\Big)^2}=\frac{k_2k_1}{(k_2+k_1)^3}v^22t+\frac{\E{h(\xi^n_t)}}{n}\] 
 So, by Doob's inequality
\[\p{\sup_{s\le t}\Big|X^n_s+\frac{h(\xi^n_t)}{\sqrt{n}}\Big|\ge C}
\le \frac{4}{C^2}\E{\Big(X^n_t+\frac{h(\xi^n_t)}{\sqrt{n}}\Big)^2}=\frac{4}{C^2}\Big(\frac{k_2k_1}{(k_2+k_1)^3}v^22t+\frac{\E{h(\xi^n_t)}}{n}\Big)\] 
Noting that $h_{\min}:=v\min(k_2,k_1)/(k_2+k_1)^2\le h\le h_{\max}:=v\max(k_2,k_1)/(k_2+k_1)^2$, and choosing $C$ (given on $t$ and $\Delta$) so that the right hand side of the inequality with $n=1$ is less than $\Delta$, shows that with $K=[-C-h_{\max},C-h_{\min}]$ the compact containment condition holds for $(X^n)_{n\ge 1}$.

Now Theorem 2.1 of \cite{k92} implies that $X^n\Rightarrow W$  in distribution on the Skorkhod space of continuous functions, where $W$ is a process with generator $\bar{A}$, and consequently has the same distribution as $\;\sqrt{{2k_2k_1v^2}/{(k_2+k_1)^3}}B\;$ where $B$ is a standard Brownian motion. \end{proof}

\subsection{Connection to Partial Differential Equations Models}
\label{subsec:particle-connect-pde}

In order to demonstrate the connection between our model and the
PDEs seen in \cite{reed90}\cite{reed94}\cite{friedman07}, consider the process $(X,\xi)$ of the particle following the piecewise linear Markov process, and for any $x\ge 0, t\ge 0$ let
\begin{align*} q(x,t)=\p{X_t\in dx, \xi_t=1}/dx,\quad p(x,t)=\p{X_t\in dx, \xi_t=0}/dx\end{align*}  
denote the probability densities of the particle's location $x$ on and off transport, respectively, over time.
Kolmogorov forward equations for $(X,\xi)$ imply that $q$ and $p$ satisfy
the system of PDEs
\begin{align}
\label{eq:pde-q} \partial_t q(x,t) + v \partial_x
q(x,t) &= -k_2 q(x,t) + k_1 p(x,t) \\
\label{eq:pde-p} \partial_t p(x,t) &= k_2 q(x,t) - k_1 p(x,t).
\end{align}

When $k_2 = 0 = k_1$, the limiting PDE is simple linear transport:
$(\partial_t + v\partial_x)q(x,t)=0.$  The  initial condition
$q(x,0) = \delta_0(x)$ corresponds to the density of a single
particle at the origin at $t = 0.$ The time evolution via simple
linear transport is translation of the delta function, while the
time evolution via the equations \eqref{eq:pde-q} and
\eqref{eq:pde-p} will have a spreading profile. This is clear from the macroscopic limits of $(X^\delta,\xi^\delta)$ as $\delta\to 0$. When $k_2=k_1=0$ the particle never switches off from traveling on transport at speed $v$ and is deterministic, as seen in Proposition~\ref{prop:active-transport}. When $k_2,k_1>0$ the particle follows a truely stochastic  process $(X,\xi)$ with a non-zero variance, as seen in Proposition~\ref{prop:on-off} and Proposition~\ref{prop:wavefront}. 

In the experiments described in the introduction one sees
``approximate'' traveling waves of radioactivity in the axons in the
sense that there is a slowly spreading wave front moving at constant
velocity away from the cell body. Equations \eqref{eq:pde-q} and \eqref{eq:pde-p} are
linear and do not have solutions that are bounded traveling waves.
It was shown by a perturbation theory argument in
\cite{reed90}\cite{reed94} that as $\ep \rightarrow 0$ the solutions of
\begin{align} \label{eq:pde-q-ep}
\ep (\partial_t + v \partial_x)
q^{\ep}(x,t) &= -k_2 q^{\ep}(x,t) + k_1 p^{\ep}(x,t), \hspace{.4in}  \\
\label{eq:pde-p-ep} \ep \partial_t p^{\ep}(x,t) &= k_2 q^{\ep}(x,t)
- k_1 p^{\ep}(x,t).
\end{align}
subject to $q^{\ep}(0,t) = q_0$, are to leading order
\[  q^{\ep}(x,t) = c_1H(\frac{x-\mu t}{\ep^{1/2}},t), \hspace{.5in} p^{\ep}(x,t) = c_2H(\frac{x-\mu t}{\ep^{1/2}},t), \]
where $H$ satisfies the heat equation
\begin{eqnarray}     
\label{eq:heat}
\partial_sH(y,s) =  \frac{\sigma^2}{2}\partial_{yy}H(y,s), \hspace{.5in} H(y,0) = \chi_{(-\infty,0)} 
\end{eqnarray}
\begin{eqnarray}
\mu = \frac{k_2v}{k_2 + k_1},  \hspace{.1in}  \sigma^2 = \frac{2k_2k_1v^2}{(k_2 + k_1)^3}
& \hspace{.5in}  c_1=\frac{k_1}{k_2+k_1}, & \hspace{.1in}  c_2=\frac{k_2}{k_2+k_1}.\nonumber
\nonumber
\end{eqnarray}
This asymptotic form is valid for small $\ep$, that is for large
$k_2$ and $k_1$. However, if we set $q(x,t)= q^{\ep}(\frac{x}{\ep},
\frac{t}{\ep})$ and $p(x,t)= p^{\ep}(\frac{x}{\ep}, \frac{t}{\ep})$,
then $q$ and $p$ satisfy \eqref{eq:pde-q-ep} and \eqref{eq:pde-p-ep}, so the solutions of \eqref{eq:pde-q} and
\eqref{eq:pde-p} behave like approximate traveling waves for large $t$ and large
$x$ whether or not $k_2$ and $k_1$ are large. These results have been proven rigorously
by Friedman and coworkers
\cite{friedman05}\cite{friedman06}\cite{friedman07}\cite{friedman07-2}. 

To see that our Proposition \ref{prop:wavefront} provides another rigorous proof of these properties, albeit using stochastic methods, consider the process $(\ep X_{\cdot/\ep},\xi_{\cdot/\ep})$ (that is $(\frac1n{X_{n\cdot}},\xi_{n\cdot})$ with $n=1/\ep$), and let 
\begin{align*} q^\ep(x,t)=\p{\ep X_{t/\ep}\in dx, \xi_{t/\ep}=1}/dx,\quad p^\ep(x,t)=\p{\ep X_{t/\ep}\in dx, \xi_{t/\ep}=0}/dx\end{align*}  
be the probability densities for this process. The generator of this process is $A^{n}$ ($n=1/\ep$), so the Kolmogorov forward equations  imply that $q^\ep$ and $p^\ep$ satisfy the system of PDEs \eqref{eq:pde-q-ep} and \eqref{eq:pde-p-ep}. 
Our result from  Proposition \ref{prop:wavefront} states that
\[\p{\ep X_{t/\ep} \in dx}/dx\approx H(\frac{x-\mu t}{{\ep}^{1/2}},t)\] 
for small $\ep>0$, where $H$ satisfies \eqref{eq:heat}. 
Hence, $q^\ep+p^\ep \approx H(\frac{x-\mu t}{{\ep}^{1/2}},t)$, and $\p{\xi_{t/\ep}=1}\approx k_1/(k_2+k_1)$, $\p{\xi_{t/\ep}=0}\approx k_2/(k_2+k_1)$, gives the result that $q^\ep(x,t)$  and $p^\ep(x,t)$ are  well approximated by $\frac{k_1}{k_2+k_1}H(\frac{x-\mu t}{\ep^{1/2}},t)$ and $\frac{k_2}{k_2+k_1}H(\frac{x-\mu t}{\ep^{1/2}},t)$. 

\

\section{Dynamics from the spatial system perspective}
\label{sec:system-persective}

\subsection{The spatial system in equilibrium}
\label{subsec:system-stationary-dist}

We are now ready to characterize the steady state dynamics induced
by continually adding particles from the nucleus and removing them
when they reach the distal end of the cell.

\begin{proposition} \label{prop:equilib} Let  $(\{(Q_i(t), P_i(t)), i =1, \dots, N\})_{t\ge 0}$ be the number of particles in the axonal transport system with compartments of size $\delta$, on and off transport, respectively.
Suppose the rate of production of particles from the source is $rq_0=v\rho_0$. Then this Markov chain has the product-form stationary distribution
\[(Q_i,P_i) \sim Pois(q_0)\otimes Pois\left(\frac{k_2 q_0}{k_1}
\right)\]
where all $\{(Q_i,P_i),i=1,\dots,N\}$ are mutually independent.
\end{proposition}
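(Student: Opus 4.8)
The plan is to recognize the Markov chain of Definition~\ref{def-particle} as a \emph{linear open migration process} --- a network of infinite-server (M/M/$\infty$-type) nodes through which indistinguishable particles flow --- and to obtain the product-Poisson stationary law from the classical flow-balance characterization of such networks; equivalently, and self-containedly, one can substitute the candidate measure directly into the global balance (stationarity) equations $\pi\mathcal{A}=0$.

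First I would set up the bookkeeping: regard the $2N$ pairs $(i,\mathrm{on})$ and $(i,\mathrm{off})$, $i=1,\dots,N$, as the nodes of the network, and read the migration rates off Definition~\ref{def-particle}. Node $(i,\mathrm{on})$ is left at a rate \emph{linear} in its occupancy $Q_i$ --- lateral hops to $(i{+}1,\mathrm{on})$ at rate $rQ_i$, switches to $(i,\mathrm{off})$ at rate $k_2Q_i$, and, for $i=N$, removal to the exterior at rate $rQ_N$ --- and node $(i,\mathrm{off})$ is left at rate $k_1P_i$, again linear in $P_i$; the only exterior source is the constant influx $rq_0$ into $(1,\mathrm{on})$. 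Linearity of all internal and external transition rates in the occupancy of the node they leave is exactly the structural hypothesis under which an open migration network has a product-form equilibrium with independent Poisson occupancies, the mean at each node being the solution of the deterministic traffic (flow-balance) equations.

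Next I would write and solve those traffic equations for the means $\bar q_i$ (on) and $\bar p_i$ (off). Balance at $(i,\mathrm{off})$ gives $k_1\bar p_i = k_2\bar q_i$ for all $i$. Balance at $(i,\mathrm{on})$ reads $(r+k_2)\bar q_i = r\bar q_{i-1} + k_1\bar p_i$ for $1<i<N$, with the boundary forms $(r+k_2)\bar q_1 = rq_0 + k_1\bar p_1$ at the source and $(r+k_2)\bar q_N = r\bar q_{N-1} + k_1\bar p_N$ at the sink. Substituting $k_1\bar p_i = k_2\bar q_i$ collapses the on-transport equations to $r\bar q_i = r\bar q_{i-1}$ for $i\ge 2$ and $r\bar q_1 = rq_0$, so by induction $\bar q_i\equiv q_0$ and therefore $\bar p_i\equiv k_2q_0/k_1$ --- precisely the parameters in the statement. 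I would then either invoke the migration-process theorem (Whittle; Kelly; Serfozo) to conclude that the product measure $\pi$ under which each $(Q_i,P_i)$ is an independent $\mathrm{Pois}(q_0)\otimes\mathrm{Pois}(k_2q_0/k_1)$ pair is stationary, or, to keep the proof elementary, verify the \emph{partial balance} equations node by node: using the elementary identity $n\,\mu(n)=\alpha\,\mu(n-1)$ satisfied by the Poisson pmf $\mu=\mathrm{Pois}(\alpha)$, check that under $\pi$ the probability flux leaving each node equals the flux entering it. Since partial balance implies global balance, this yields $\pi\mathcal{A}=0$.

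Finally I would note that the chain is irreducible on $\mathbb{N}^N\times\mathbb{N}^N$: from an arbitrary state one can switch every off-transport particle on, transport all particles to compartment $N$, and remove them, reaching the empty configuration, from which any configuration is reachable via the source and the internal transitions. Hence the chain is positive recurrent and the product-Poisson law above is its unique stationary distribution. The only step I expect to require genuine care is the treatment of the two boundary compartments, where the constant source term $rq_0$ and the sink term $rQ_N$ modify the generic flow-balance and partial-balance relations; elsewhere the argument is the standard product-form mechanism for networks of infinite-server nodes, and I anticipate no substantive obstacle.
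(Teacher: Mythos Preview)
Your proposal is correct, but the route differs from the paper's in organization. The paper argues \emph{inductively along the chain}: it first writes the generator for $(Q_1,P_1)$ alone with constant source $rq_0$, checks by a direct term-by-term computation that $\mathrm{Pois}(q_0)\otimes\mathrm{Pois}(k_2q_0/k_1)$ annihilates that generator, then observes that in equilibrium the feed from compartment~1 into compartment~2 is again $rQ_1$ with $Q_1\sim\mathrm{Pois}(q_0)$, verifies stationarity for $(Q_1,Q_2,P_2)$ by the same computation, and concludes by induction that all $(Q_i,P_i)$ are i.i.d.\ with the claimed product law. Your approach is the global network one: view all $2N$ nodes at once as an open linear migration process, solve the traffic equations in a single pass, and then either cite the Whittle--Kelly product-form theorem or verify partial balance via the Poisson identity $n\mu(n)=\alpha\mu(n-1)$. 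Your argument is cleaner and does not depend on the feed-forward topology, whereas the paper's sequential verification is more elementary in that it never invokes the general network theorem and makes the independence across compartments emerge one step at a time (indeed the paper closes by remarking that the result is also a special case of Kelly's clustering processes, which is essentially the theorem you are invoking). You also supply an irreducibility argument for uniqueness, which the paper does not address.
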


\begin{proof}
Since the production rate is $rq_0$, the generator of the process $(Q_1,P_1)$ is
\begin{align*}
\gen_{q_0} f(q,p) &= [f(q+1,p) - f(q,p)] rq_0  + [f(q-1,p) - f(q,p)] r q\\
& + [f(q-1,p+1) - f(q,p)] k_2 q  + [f(q+1,p-1) - f(q,p)] k_1 p
\end{align*}
If we use $f(q,p) = Q_1(t)$ and $f(q,p) = P_1(t)$, and take expectations, we get a system of ODEs governing the change in $\E{Q_1}, \E{P_1}$ over time
\begin{align*}
\frac{d\E{Q_1}(t)}{dt}&=rq_0 - r\E{Q_1(t)} - k_2\E{Q_1(t)} + k_1\E{P_1(t)}\\
\frac{d\E{P_1}(t)}{dt}&= k_2\E{Q_1(t)} - k_1\E{P_1(t)}
\end{align*}
indicating that in equilibrium  in the first section the mean numbers of on-transport particles and off-transport particles are $\E{Q_1}=q_0$, and $\E{P_1}=\E{Q_1} \frac{k_2}{k_1}=q_0 {k_2}/{k_1}$, respectively. 
Let $\pi_{q_0}(q,p)=\pi_{\lambda_Q}(q)\otimes\pi_{\lambda_P}(p)$
be a product of two independent Poisson distributions with rates
$\lambda_Q=q_0$, and
$\lambda_P=q_0{k_2}/{k_1}$ respectively. To show that
$\pi_{q_0}(q,p)$ is a stationary distribution for the process
$(Q_1,P_1)$ we need to check that
$$
\sum_{q=0}^\infty\sum_{p=0}^\infty  \gen_{q_0}
f(q,p) \pi_{q_0}(q,p) =0
$$
for any choice of function
$f\in\mathcal{D}(\gen_{q_0})$.
\begin{align*}
&\sum_{q=0}^\infty\sum_{p=0}^\infty  \gen_{q_0} f(q,p) e^{-(\lambda_Q+\lambda_P)} \frac{\lambda_Q^q}{q!}\frac{\lambda_P^p}{p!}\\
&\quad = e^{-(\lambda_Q+\lambda_P)} \sum_{q=0}^\infty\sum_{p=0}^\infty  \frac{\lambda_Q^q}{q!}\frac{\lambda_P^p}{p!}
\Big( [f(q + 1,p) - f(q, p)] r{q_0}  + [f(q - 1, p) - f(q, p)] r q\\
& \qquad\qquad +  [f(q - 1, p + 1) - f(q, p)] k_2 q  + [f(q + 1, p -
1) - f(q, p)] k_1 p\Big)
\end{align*}
In the two sums the factor
multiplying $f(q,p)$ for any $(q,p)\in\mathbb{N}\times\mathbb{N}$ comes only from
terms involving $\{q-1,q,q+1\}$ and $\{p-1,p,p+1\}$ and equals
$e^{-(\lambda_Q+\lambda_P)} $ times
\begin{align*}
&\frac{\lambda_Q^{q-1}}{(q-1)!}\frac{\lambda_P^p}{p!}rq_0
-\frac{\lambda_Q^q}{q!}\frac{\lambda_P^p}{p!}rq_0
+\frac{\lambda_Q^{q+1}}{(q+1)!}\frac{\lambda_P^p}{p!}r(q+1)
-\frac{\lambda_Q^{q}}{q!}\frac{\lambda_P^p}{p!}rq \\
&\quad\qquad\qquad+ \frac{\lambda_Q^{q+1}}{(q+1)!} \frac{\lambda_P^{p-1}}{(p-1)!}k_2(q+1)
-\frac{\lambda_Q^{q}}{q!}\frac{\lambda_P^p}{p!}k_2q\\
&\quad\qquad\qquad+\frac{\lambda_Q^{q-1}}{(q-1)!}\frac{\lambda_P^{p+1}}{(p+1)!}k_1(p+1)
-\frac{\lambda_Q^{q}}{q!}\frac{\lambda_P^p}{p!}k_1p\\
&=\frac{\lambda_Q^q}{q!} \frac{\lambda_P^p}{p!}
\Big( \frac{q}{\lambda_Q}rq_0-rq_0+\frac{\lambda_Q}{q+1}r(q+1)-rq+\frac{\lambda_Q}{q+1}\frac{p}{\lambda_P}k_2(q+1)\\ &\quad\qquad\qquad-k_2q+\frac{q}{\lambda_Q}\frac{\lambda_P}{p+1}k_1(p+1)-k_1p\Big)\\
&=\frac{\lambda_Q^q}{q!} \frac{\lambda_P^p}{p!}\Big(\frac{q}{q_0}rq_0-rq_0+q_0 r-rq\Big)=0
\end{align*}
since  $\lambda_Q=q_0$ and
${\lambda_P}/{\lambda_Q}={k_2}/{k_1}$.

Thus, in equilibrium the input rate for $(Q_2,P_2)$, which is $rQ_1$, has a Poisson distribution with mean $rq_0$, and is independent of $P_1$. Let $\pi_{q_0}(q_1)\otimes\pi_{\lambda_Q}(q_2)\otimes\pi_{\lambda_P}(p_2)$ be a product of three Poisson distributions with rates $q_0$, $\lambda_Q=q_0$, and $\lambda_P=q_0{k_2}/{k_1}$ respectively. To show that this is a stationary distribution for the process
$(Q_1,Q_2,P_2)$ we need to check that
$$
\sum_{q_1=0}^\infty\sum_{q_2=0}^\infty\sum_{p_2=0}^\infty  \gen_{q_1}
f(q_2,p_2) \pi_{q_0}(q_1)\pi_{q_1}(q_2,p_2) =0
$$
for any choice of function
$f\in\mathcal{D}(\gen_{q_1})$.
For each fixed value $q_1$, according to our previous calculation the inner two sums give 0, so  the whole sum is $0$. 

Thus, in equilibrium, $(Q_2,P_2)$ have the distribution $\pi_{\lambda_Q}(q)\otimes\pi_{\lambda_P}(p)$ with $\lambda_Q=q_0, \lambda_P=q_0{k_2}/{k_1}$ as well, and are independent from $(Q_1,P_1)$.
It follows by induction that the stationary distributions for
$\{(Q_i,P_i)\}$ are independent and identically distributed as
$\pi_{\lambda_Q}(q)\otimes\pi_{\lambda_P}(p)$, with $\lambda_Q=q_0,\lambda_P=q_0
k_2/k_1$. This is also an example of a clustering process satisfying
the detailed balance conditions with linear rates discussed in Sec.
8.2 of \cite{K79}.
\end{proof}
 
We point out that the mean number of particles both on and off transport is $(1+\frac{k_2}{k_1})q_0=(1+\frac{k_2}{k_1})\rho_0\delta$, scaling with the size of a compartment. To obtain the mean number of particles per unit length we  add particles in $\approx 1/\delta$ compartments, and the mean number of particles per unit length is $(1+\frac{k_2}{k_1})\rho_0$ independent of the choice of compartment size.

One immediate consequence is the analogous result for the number of particles in the stochastic non-compartmental model  at any location along the axon. Namely, suppose the particles move according to the piecewise-linear Markov process $(X,\xi)$ with a Poisson rate $\rho_0$ influx of new particles at location $0$.  Then, at any location $0<x<L$ along the axon, the numbers of particles $(Q_{(x,x+dx)}, P_{(x,x+dx)})$ on and off transport, respectively, have the stationary distribution $\mbox{Pois}(\rho_0dx)\otimes\mbox{Pois}(\frac{k_2}{k_1}\rho_0dx)$ where for any $x_1,\dots x_k\in(0,L)$, $\{Q_{x_i}\}_{1\le i\le k}$ and $\{P_{x_i}\}_{1\le i\le k}$ are mutually independent.
We note that this result would not have been obvious to notice without going through the compartmental model first, yet its consequences for prediction and analysis of the long term stochastic behavior of the system are quite powerful.

\subsection{Homogeneity of the axons at equilibrium}
\label{subsec:homog-at-equilib}

Recall that $\delta = 10nm$, roughly the step size of motor
proteins, and that axons can be up to one meter in length. Thus we
are interested in phenomena on all the length scales
$10^{\nu}\delta$, where $\nu = 0, 1, 2, \ldots 8.$ Let $\Delta =
10^{\nu}\delta;$ we want to determine how similar different segments
of the axon of size $\Delta$ are. Let $Q_{\Delta}$ and $P_{\Delta}$
denote the numbers of on-track and off-track particles in a segment
of length $\Delta.$

In equilibrium, $Q_{\Delta}$ and $P_{\Delta}$  are both  sums of
$10^\nu$ independent Poisson random variables with parameters
$\lambda_Q=q_0$ and $\lambda_P=\frac{k_2}{k_1}q_0$, respectively.
Therefore the distributions of $Q_{\Delta}$ and $P_{\Delta}$ are
Poisson with parameters $ 10^\nu \lambda_Q$ and $10^\nu \lambda_P$,
respectively. The mean and the variance of the number of particles
in the segment of length $\Delta$ is  $ 10^\nu (\lambda_Q +
\lambda_P).$ To see how homogeneous different slices of length
$\Delta$ are, we consider the coefficient of variation,
$c_{\Delta}$, which is the standard deviation divided by the mean.
$$
c_{\Delta} = \frac{1}{\sqrt{(\lambda_Q + \lambda_P) 10^{\nu}}}=
\frac{1}{\sqrt{(1+k_2/k_1)q_0 10^{\nu}}}.
$$

As indicated in Section \ref{subsec:model}, $q_0$ is in the range
0.25 to 2.5 in different axons. For illustrative purposes here, we
will assume $q_0 = 1$. Since $k_2/k_1 = 3$, we see that the
scale-dependent coefficient of variation $ c_\Delta = 1 / (2
\sqrt{10^{8} \Delta})$. Therefore at the ten nanometer scale the
coefficient of variation is simply 1/2.  At the micron scale
$c_\Delta = 1 / 20$ and at the millimeter scale $c_\Delta = 0.5
\times 10^{-5/2}$.  The cutoff between ``high variance'' and ``low
variance'' distributions is usually considered to be when the
coefficient of variation is near 1, so by this standard the axon is
extremely homogeneous in its length at large scales.\\

\subsection{Balance between efficient transport and targeted delivery}
\label{sec:target}

The preceding characterization of the transport apparatus enables us to address questions concerning whole cell function. One core issue is that intracellular transport must simultaneously accommodate two functional demands:  some material, such as the enzymes used to construct neurotransmitters, must be transported from the soma to the axon terminal in a timely manner; whereas other cargo, such as sodium channels, need to be delivered to unspecified locations as needs arise throughout the length of the cell.  The tradeoff between these two goals is clear.  If a typical vesicle spends the vast majority of its time in transport mode, the mean velocity will be close to the mean on-transport velocity, but any needs that arise in the central part of the cell will be neglected.  On the other hand, if a typical vesicle spends too much time off transport, presumably available for use if needed locally, then it will take substantially longer to traverse the entire cell. 

Recently Bressloff and Newby \cite{newby09,newby10} modeled particles that are created near the nucleus that then undergo intermittent search (being in search mode while off transport and not in search mode when on transport) for a target hidden somewhere along the axon. Their model is a non-compartmental individual particle model with the additional feature that vesicles can move backwards as well as forwards. They compute the probability that the particle is successful and conditioning on success, the mean first hitting time. 
With our system-wide model we can accommodate the observation the if a given vesicle misses the target, another vesicle with similar cargo will pass by before too long. We will assess this hitting time under two assumptions about the density of relevant material.  Our standing assumption $q_0 \sim 1$ is appropriate for types of cargo that are found densely throughout the cell. In this setting, the wait time is essentially just the time it takes for one of several nearby vesicles to unbind from transport in the target region. A more interesting case is a setting where the needed cargo is sparsely distributed, say $q_0 \sim 10^{-3}$.  In this setting, if the first cargo to reach the target region fails to unbind, there will be significant time before the next arrival.  As we will see, we can still assess the trade-off intrinsic between risking a target miss and diminishing the time of the next arrival. 

To make the discussion concrete we define a target region $R_n = \{i_* + 1, \dots, i_* + n\}$ where $i_* \in 1, \ldots, N - n$.  At time zero, we take the system $\{Q_i(0), P_i(0)\}$ to be drawn from the stationary distribution described by Proposition \ref{prop:equilib} conditioned on the event that for all $i \in R_n$, $Q_i(0) = P_i(0) = 0$. We introduce the hitting time $H_n := \inf \{t > 0: \sum_{i \in R_n} P_{i}(t) > 0 \}$, which marks the first time a particle is off transport while in $R_n$.  Since computing the mean of $H_n$ is analytically intractable, we introduce another hitting time $H_n'$, stochastically dominating $H_n$, that nevertheless reflects the essential tradeoff between maximizing mean velocity and making detachment from transport likely in the target region.  

Let $I_* := \{ i \in \{1, \ldots, i_*\} \, | \, Q_i(0) + P_i(0) > 0 \}$ be the set of all non-empty sections of the cell at time 0. Among the particle in these sections some will be ``successful'' in that they will detach from transport in the target region, while others will be unsuccessful.  Labeling each particle in these terms, we decompose $I_0$ into locations with successful particles $I^s_*$ and unsuccessful particles $I^u_*$. We are interested in the time $H'_n$ at which the rightmost successful particle, that is, a particle starting from position $i_m = \max\{I_*^s\}$, detaches while in $R_n$. If $S=0$ then we define $i_m=\max\{\emptyset\}:=0$. For the position of this particle we use the notation $(X^\delta_t)_{t\ge 0}$ from Section \ref{sec:particle-perspective},  where $X^\delta_0=\delta i_m$, $X^\delta_t \in \{\delta i_m, \ldots, \delta N\}$, together with the indicator $(\xi^\delta_t)_{t\ge 0}, \xi^\delta_t \in \{0,1\}$ of whether the particle is on or off transport, respectively.  Let
\[ H'_n := \inf\{t > 0: (X^\delta_t, \xi^\delta_t)=(\delta(i_*+1),1)\}. \]
Note that $H'_n=\inf\{t>0: X^\delta_t\in \delta R_n\}$, since particles cannot skip sections, and always enter a section on transport. If we were to analogously define a sequence of times $\{H_n^{i}\}_{i\in I^s_*}$ where for each $ i\in I^s_*$,  $H_n^{i}=\inf\{t>0: X^\delta_t\in \delta R_n\}$ with  $X^\delta_0=\delta i$ and $\xi^\delta_0\in\{0,1\}$,  then the exact first hitting time of the target region will satisfy $H_n=\min\{H_n^{i}: i\in I^s_*\}$. Hence, clearly $H_n\le H^{i_m}_n\equiv H'_n$. 

Since the exact distribution of $H_n^{i}$ is complicated, $\E{H_n}$ is analytically intractable, and instead we focus on finding a simple expression for $\E{H'_n}$. Our point is that, in the sparse material limit ($q_0$ small), the rightmost particle becomes increasingly likely to be the first successful particle to detach in the target region, and $H_n$ approaches $H'_n$.  

The computation of $\E{H'_n}$ requires computing the time it takes a particle to travel a certain axonal distance, given by the following.

\begin{lemma} \label{lem:travel-time}
Let the initial position of the particle be $(X^\delta_0, \xi^\delta_0)=(0,1)$ and let $L_* \in \{1,\dots,L\}$ be given where $L$ is the total length of the axon. Then, the time $T_{L_*} = \inf \{t > 0 : X^\delta_t =L_*\}$, satisfies
\[\E{T_{L_*}} = \frac{L_*(k_2+k_1)}{vk_1}.\]
\end{lemma}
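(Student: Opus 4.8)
\emph{Proof idea.} The plan is to read off a renewal structure from the dynamics and reduce everything to the mean time to make a single lateral jump. First I would record the key structural fact: in the compartmental model a lateral jump neither requires nor produces a change of the on/off indicator, and the particle enters a new compartment \emph{only} by making a lateral jump, so the particle is on transport at the very instant it first occupies each new compartment. In particular $\xi^\delta_{T_{L_*}}=1$, and to reach $L_*=N_*\delta$ the particle must perform exactly $N_*:=L_*/\delta$ lateral jumps. Writing $0=t_0<t_1<\dots<t_{N_*}=T_{L_*}$ for the successive lateral-jump times, the strong Markov property at $t_{j-1}$ (where the state is $((j-1)\delta,1)$) shows that $\tau_j:=t_j-t_{j-1}$ has the law of the time $\tau$ to produce one lateral jump starting from the on-transport state. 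Since $N_*$ is deterministic, $\E{T_{L_*}}=\sum_{j=1}^{N_*}\E{\tau_j}=N_*\,\E{\tau}$ by linearity of expectation, and once $\E{\tau}<\infty$ is known each $\tau_j$ is a.s.\ finite, so the particle indeed reaches $L_*$.

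Next I would compute $\E{\tau}$ by a one-step (first-step) analysis. From the on-transport state the next event, after an $\mathrm{Exp}(r+k_2)$ holding time, is a lateral jump with probability $r/(r+k_2)$ and a switch to off-transport otherwise; in the first case $\tau$ is realized, while in the second the particle waits an $\mathrm{Exp}(k_1)$ time, returns to the on-transport state, and the experiment begins anew. This yields
\[
\E{\tau}=\frac{1}{r+k_2}+\frac{k_2}{r+k_2}\Bigl(\frac{1}{k_1}+\E{\tau}\Bigr),
\qquad\text{hence}\qquad
\E{\tau}=\frac{1}{r}\Bigl(1+\frac{k_2}{k_1}\Bigr)=\frac{k_2+k_1}{r\,k_1}<\infty.
\]
Substituting $r=v/\delta$ and $N_*=L_*/\delta$ then gives $\E{T_{L_*}}=N_*\E{\tau}=\frac{L_*(k_2+k_1)}{vk_1}$, which is the claim.

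For a computation more in the spirit of the generator manipulations above, I would instead apply optional stopping to the martingale $M_t:=X^\delta_t-\frac{v}{k_2+k_1}\bigl(1-\xi^\delta_t\bigr)-\frac{k_1v}{k_2+k_1}t$: taking $g(x,\xi)=x-\frac{v}{k_2+k_1}(1-\xi)$ one checks $A^\delta g\equiv\frac{k_1v}{k_2+k_1}$ on $\xi\in\{0,1\}$, so $M$ is a martingale with $M_0=0$ and, using $\xi^\delta_{T_{L_*}}=1$, $M_{T_{L_*}}=L_*-\frac{k_1v}{k_2+k_1}T_{L_*}$, from which $\E{T_{L_*}}$ follows at once. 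In either route the only step that needs care is the interchange of expectation and limit --- the finiteness $\E{T_{L_*}}<\infty$ for the linearity/Wald step, respectively the dominating bound $|M_{t\wedge T_{L_*}}|\le L_*+\frac{v}{k_2+k_1}+\frac{k_1v}{k_2+k_1}T_{L_*}$ for optional stopping --- and this is furnished by $\E{\tau}<\infty$. I do not expect a genuine obstacle: the conceptual crux is just the observation that a lateral jump always leaves the particle on transport, together with the elementary exponential-race identity for $\E{\tau}$.
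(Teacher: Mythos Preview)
Your proposal is correct. Your primary route---the renewal/first-step decomposition---is genuinely different from the paper's and arguably more elementary: you exploit that lateral jumps leave the on/off indicator unchanged, so the particle is always on transport at each arrival in a new compartment, and hence $T_{L_*}$ splits into $N_*=L_*/\delta$ i.i.d.\ pieces whose common mean you compute by a one-step conditioning. This avoids martingales entirely and makes the $\delta$-independence of the answer transparent. The paper instead applies optional stopping to two separate martingales, $M^1_t=X^\delta_t-v\int_0^t\xi^\delta_s\,ds$ and $M^2_t=\xi^\delta_t+(k_2+k_1)\int_0^t\xi^\delta_s\,ds-k_1t$, eliminating the nuisance quantity $\E{\int_0^{T_{L_*}}\xi^\delta_s\,ds}$ between the two stopped equations. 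Your alternative single-martingale route is essentially this same argument compressed: your $M_t$ equals $M^1_t+\frac{v}{k_2+k_1}M^2_t-\frac{v}{k_2+k_1}$, so you have simply taken the linear combination up front that the paper takes after stopping. Both the paper and your second route rely on the observation $\xi^\delta_{T_{L_*}}=1$, which you state explicitly and the paper uses implicitly when writing $\E{M^2_{T_{L_*}}}=1$. The justification you give for optional stopping (bounded increments together with $\E{T_{L_*}}<\infty$, the latter supplied by your renewal computation) is the same as the paper's.
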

\begin{proof}
Since $(X^\delta_t,\xi^\delta_t)_{t\ge 0}$ is a Markov process with generator 
\begin{align*}A^\delta f(x,\xi)&=r\xi\big[f(x+\delta,\xi)-f(x,\xi)\big]\\& + k_2\xi\big[f(x,\xi-1)-f(x,\xi)\big]+ k_1(1-\xi)\big[f(x,\xi+1)-f(x,\xi)\big]\end{align*}
it follows that
$M^1_t:=X^\delta_t-r\delta\int_0^t\xi^\delta_sds$ and $M^2_t:=\xi^\delta_t+\int_0^t k_2\xi^\delta_sds-\int_0^tk_1(1-\xi^\delta_s)ds$ are both martingales. Since both $M^1$ and $M^2$ have bounded increments and $\E{T_{L_*}}<\infty$, the optional stopping theorem implies that
\[0=\E{M^1_0}=\E{ M^1_{T_{L_*}} }=L_*-v\mathbb{E}\Big[\int_0^{T_{L_*}}\xi^\delta_sds\Big] \Rightarrow E\Big[\int_0^{T_{L_*}}\xi^\delta_sds\Big]=L_*/v\]
and 
\[1=\E{M^2_0}=\E{M^2_{T_{L_*}}}=1-k_1\E{T_{L_*}}+(k_2+k_1)\mathbb{E}\Big[\int_0^{T_{L_*}}\xi^\delta_sds \Big] \]
which implies 
\[ \E{T_{L_*}}=\frac{k_2+k_1}{k_1}\mathbb{E}\Big[\int_0^{T_{L_*}}\xi^\delta_sds\Big].\] 
and our claim follows.
\end{proof}

We also note that the same computation holds for the mean time a particle in the stochastic non-compartmental model $(X_t,\xi_t)_{t\ge 0}$ takes to reach a distance $L_*$, as the two martingales used in the proof depend only on $v$ and not on $\delta$.

We next compute the hitting time $H'_n$ of a particle started at location $i_m\in I_*$.

\begin{lemma} \label{lem:hitting-time}
Let the system $\{(Q_i(0), P_i(0)), i =1, \dots, N\}$ have the stationary distribution given by Proposition \ref{prop:equilib} conditional on $Q_i(0)=P_i(0)=0,$ for all $i \in R_n$. Then
\begin{align*}
	\E{H'_n} &= (1-e^{-\lambda_n})\left[\frac{1}{rq_0p_n} +\frac{k_2}{k_1(k_2+k_1)}\right] +  \frac{1}{k_2 p_n} \left[1-\left(\frac{r}{k_2 + r}\right)^n \left(1 + \frac{n k_2} {k_2 + r}\right)\right]\end{align*}
where $p_n = 1-(\frac{r}{k_2+r})^{n}$ and $\lambda_n= \frac{k_2+k_1}{k_1}q_0i_*p_n$.
\end{lemma}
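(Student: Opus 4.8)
The plan is to split $H'_n$ into the time the tracked particle needs to reach the target region and the additional time it then spends inside $R_n$ before detaching, and to evaluate the two means separately. The first step is a thinning argument. By Proposition~\ref{prop:equilib} the particles in any section $i\le i_*$ at time $0$ are $\mathrm{Pois}(q_0)$ on transport and $\mathrm{Pois}(\tfrac{k_2}{k_1}q_0)$ off transport, independently; and by the strong Markov property a particle entering $R_n$ on transport detaches somewhere inside $R_n$ precisely when it fails to step out of each of the $n$ sections before detaching, which has probability $p_n=1-(\tfrac{r}{r+k_2})^{n}$. Labelling particles as successful or not is thus an independent Poisson thinning, so the successful particles in $\{1,\dots,i_*\}$ are $\mathrm{Pois}(\beta)$ per section with $\beta=\tfrac{k_1+k_2}{k_1}q_0p_n$, hence $\mathrm{Pois}(\lambda_n)$ in total with $\lambda_n=i_*\beta$; in particular $\p{I^s_*=\emptyset}=e^{-\lambda_n}$. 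The same thinning shows successful particles emerge from the nucleus at rate $rq_0p_n$, so at stationarity they cross any fixed location at flux rate $rq_0p_n$.

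For the within-region term I would apply the strong Markov property at the moment of entry into $R_n$: the particle enters on transport, each sojourn in a section is $\mathrm{Exp}(r+k_2)$ and independent of whether the next move is a forward step (probability $\tfrac{r}{r+k_2}$) or a detachment (probability $\tfrac{k_2}{r+k_2}$), and the tracked particle is by construction successful, so the number $M$ of forward steps taken inside $R_n$ is geometric conditioned to be at most $n-1$. Conditionally on $M=m$ the time spent in $R_n$ before detaching is a sum of $m+1$ independent $\mathrm{Exp}(r+k_2)$ variables, so averaging $\tfrac{m+1}{r+k_2}$ over the truncated geometric law and summing the resulting finite series gives exactly $\tfrac{1}{k_2p_n}\left[1-(\tfrac{r}{k_2+r})^{n}(1+\tfrac{nk_2}{k_2+r})\right]$, the last bracket of the statement; this contribution is independent of everything preceding the entry.

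The approach term is the substantive one. On $\{I^s_*\ne\emptyset\}$ the tracked particle sits in section $i_m=\max I^s_*$, where, by the relative Poisson intensities of off versus on successful particles, it is off transport with probability $\tfrac{k_2}{k_1+k_2}$, contributing expected start-up delay $\tfrac{k_2}{k_1+k_2}\cdot\tfrac1{k_1}=\tfrac{k_2}{k_1(k_1+k_2)}$; thereafter, by Lemma~\ref{lem:travel-time} (in its non-compartmental form), the mean time to traverse the intervening sections to $R_n$ is $\tfrac{k_1+k_2}{rk_1}$ per section. On $\{I^s_*=\emptyset\}$ the tracked particle is taken to start at the nucleus on transport (no start-up delay), and the same per-section rate applies. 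One then averages the number of intervening sections over the Poisson law of the successful particles in $\{1,\dots,i_*\}$: with $Y$ the rightmost successful location (sections measured from the nucleus, the entry to $R_n$ being at $i_*$), the identity $(1-e^{-\lambda_n})\,\E{i_*-Y\mid I^s_*\ne\emptyset}+i_*e^{-\lambda_n}=\tfrac{1-e^{-\lambda_n}}{\beta}$ — which encodes that a fixed location sees successful particles $\tfrac{1}{rq_0p_n}$ time-units of travel apart on average — collapses the weighted average to $(1-e^{-\lambda_n})\,\tfrac{k_1+k_2}{rk_1}\cdot\tfrac1\beta=(1-e^{-\lambda_n})\,\tfrac1{rq_0p_n}$. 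Adding the start-up delay and the within-region term produces the stated formula.

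I expect the main obstacle to be precisely this collapsing step: the bookkeeping must bring the geometrically spaced initial successful particles, the Poisson production stream at the nucleus, and the hard left boundary at section $1$ into one exponential computation, and the clean cancellation is what forces the normalization $\lambda_n=\tfrac{k_1+k_2}{k_1}q_0i_*p_n$. It is most transparent in the $\delta\to0$ picture — legitimate by the remark after Lemma~\ref{lem:travel-time} and by Proposition~\ref{prop:on-off} — where the stationary configuration of successful particles is a genuine Poisson process on $[0,\delta i_*]$ and the max-of-a-Poisson identity above is exact; in the finite-$\delta$ chain it holds up to the usual $1-e^{-\beta}$ versus $\beta$ correction, negligible in the sparse regime $q_0\ll1$ at which this lemma is aimed. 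One should also first note $\E{H'_n}<\infty$, needed both for the optional-stopping step inside Lemma~\ref{lem:travel-time} and for the interchanges of expectation; this follows by dominating $H'_n$ by the hitting time of $R_n$ of a single particle released at the nucleus at time $0$, which is finite by the same lemma.
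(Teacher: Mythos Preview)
Your proposal is correct and follows essentially the same route as the paper: Poisson thinning to isolate successful particles, the decomposition $H'_n=H'_{e,n}+H'_{o,n}$ into approach time and within-region detachment time, Lemma~\ref{lem:travel-time} for the per-section traversal cost, the truncated geometric for the detachment location, and the max-of-Poisson computation for the position of the rightmost successful particle (your identity $(1-e^{-\lambda_n})\E{i_*-Y\mid I^s_*\ne\emptyset}+i_*e^{-\lambda_n}=(1-e^{-\lambda_n})/\beta$ is exactly the paper's $\E{i_*-i_m}\approx i_*\,\E{1/(S+1)}=i_*(1-e^{-\lambda_n})/\lambda_n$). You are, if anything, more careful than the paper in flagging that this last identity is exact only in the $\delta\to0$ Poisson-process picture and in noting the need for $\E{H'_n}<\infty$.
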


\begin{proof} The proof of Proposition~\ref{prop:equilib} shows that conditioning on the values of $P_i(0), i\in R^n$ does not affect the law of $\{(Q_i(t), P_i(t)), i\le i_*\}$, hence for any $t\ge 0$, they form two mutually independent sequences of Pois($q_0$) and Pois($\frac{k_2 q_0}{k_1}$) random variables. 

Let $S$ denote the number of successful particles between sites $0$ and $i_*$ at time zero. The total of number of particles at time $0$ at these sites that are either on or off transport is distributed as a Pois$(\frac{k_2+k_1}{k_1} q_0i_*)$ variable. We next compute  the probability $p_n$ that any particle once it reached the target region is ``successful'' in detaching there. This probability can be written $p_n:=\sum_{i=1}^np(i)$ where $p(i)$ is the probability it first detaches at location $i_*+i$. Since  $p(i)=(\frac{r}{k_2+r})^{i-1}\frac{k_2}{k_2+r}$ is the chance a particle gets off transport in the $i$-th compartment, $p_n=\sum_{i=1}^np(i)=1-(\frac{r}{k_2+r})^{n}$.  
The probability that a particle is successful does not depend on which location it was at time $0$, and whether it was on or off transport at that time. 
Hence, $S$ is is distributed as a Pois$(\frac{k_2+k_1}{k_1} q_0i_*p_n)$ variable, and conditioned on the value $S$, the set of locations of the particles at time 0 is distributed as a set of $S$ draws from the uniform distribution on $\{1, \ldots, i_*\}$.  Since $i_m$ is the maximum of $S$ samples from a Uniform$\{1,\dots, i_*\}$ distribution we have
$ \E{i_m|S}=i_*-\frac{1}{i_*^S}\sum_{x=1}^{i_*-1}x^S \approx i_*\frac{S}{S+1}$. 

We now decompose $H'_n = H'_{e,n}+H'_{o,n}$, where we let $H'_{e,n}$ is the time it takes a successful particle initially at location $i_m$ to enter the region $R_n$, and $H_{o,n}$ is the time it takes any successful particle after it enters the region to get off transport. Thus, $\E{H'_n}=\E{H'_{e,n}}+\E{H'_{o,n}}$. 

If a successful particle starts at location $X^\delta_0=i_m\in\{0,\dots, i_*\}$ on transport $\xi^\delta_0=1$, then the time it takes to enter the region $R^n$ is by Lemma~\ref{lem:travel-time}
\[ 
\E{H'_{e,n}|(X^\delta_0,\xi^\delta_0)=(i_m,1)} = \frac{(i_*-i_m) \delta (k_2 + k_1)}{v k_1} 
\]
If it starts at location $X^\delta_0=i_m\in\{1,\dots, i_*\}$ off transport $\xi^\delta_0=1$ then it takes an additional Exponential time with mean $1/k_1$ for it to get back on transport at the same location, so $\E{H'_{e,n}|(X^\delta_0,\xi^\delta_0)=(i_m,0)} =\E{H'_{e,n}|(X^\delta_0,\xi^\delta_0)=(i_m,1)}+ \frac{1}{k_1}$, and since we assume new particles always enter the system on transport 
\[ 
\E{H'_{e,n}|i_m} = \frac{(i_*-i_m) \delta (k_2 + k_1)}{v k_1} + \frac{1}{k_1}\mathbf{1}_{i_m>0}.
\]
Since $\E{i_m|S} =  i_* \frac{S}{S+1}$, $\p{i_m>0}=\p{S>0}$, we get that
\[ 
\E{H'_{e,n}} = \E{\frac{1}{S+1}}\frac{i_* (k_2 + k_1)}{r k_1} +\frac{k_2}{k_1(k_2+k_1)}\p{S>0}
\]
with $S\sim\mbox{Pois}(\lambda_n)$, $\lambda_n= \frac{k_2+k_1}{k_1}q_0i_*p_n$. Since $\E{S}=\frac{1-e^{-\lambda_n}}{\lambda_n}$,  $\p{S>0}=1-e^{-\lambda_n}$, we get
\[ 
\E{H'_{e,n}} = (1-e^{-\lambda_n})\left(\frac{1}{rq_0p_n} +\frac{k_2}{k_1(k_2+k_1)}\right)
\]

To calculate $\E{H'_{o,n}}$ note that if a particle first gets off transport in the $i$-th compartment then the time of its travel until this point is a sum of $i$ independent and identically distributed exponential random variables with  parameter $k_2+r$. The probability a successful particle first gets off transport in the $i$-th compartment is $\frac{p(i)}{p_n}$. Hence, the time a successful particle takes to get off transport once it enters the region $R$ has the mean
\[\E{H'_{o,n}}=\sum\limits_{i=1}^n \frac{p(i)}{p_n} \frac{i}{k_2+r} = \frac{1}{k_2p_n}\left[1-\left(\frac{r}{k_2 + r}\right)^n\left(1 + \frac{n k_2}
{k_2 + r}\right)\right].\]
and our claim follows.
\end{proof}

To complete our analysis, we wish to characterize this result in terms of length along the axon and independent of the stepping size $\delta$. To this end, we fix a length $\ell$ and for a given compartment size $\ell$,  we let $n=\lceil \ell / \delta \rceil$, and for the start of the region we let $i_*=\lceil \ell_* / \delta \rceil$. As such, as $\delta\to 0$ the limiting region becomes $R_\ell=(\ell_*,\ell_*+\ell)$. Then, under the assumption that $q_0/\delta\to \rho_0$ we have 
\[p_n \to p_{\ell}= 1-e^{-\ell k_2/v},\quad \lambda_n\to \lambda_{\ell}=\frac{k_2+k_1}{k_1}\rho_0\ell_*p_\ell\]
and $\E{H'_n} \to \E{H'}$ where
\begin{equation} \label{eq:H-prime}
\E{H'}= \big(1-e^{-\lambda_\ell}\big)\left[\frac{1}{v \rho_0 p_\ell}+  \frac{k_2}{k_1(k_2+k_1)}\right]+\frac{1}{k_2 p_\ell}\left[1-e^{-\frac{\ell k_2}{v}}(1+\frac{\ell k_2}{v})\right].
\end{equation}

It remains to interpret this result with respect to the parameter choices we have made. First, we must set a value for the size $\ell$ of the target region. For this purpose we note that the typical size of a Node of Ranvier -- a gap in the myelin sheath of a myelinated axon where sodium channels are concentrated in the cell membrane -- is approximately one micron.\footnote[1]{We do not wish to claim this is a complete model for deposition of sodium channels in Nodes of Ranvier, as the particulars of the biology -- which may include factors like local signals that encourage motors to detach from microtubules near the Nodes -- are not fully understood.  We merely wish to use the size of the Nodes to fix our intuition about the size of a target to other important length scales such as that of microtubules which are also a micron in length.}

As seen in the preceding proof there are three contributions to $\E{H'}$, and we next analyze them in terms of their dominance for the overall value. We begin with the contribution from the last term, which is the time it takes for a successful particle to detach once it has reached the target region. By our choice of $\ell$, the recurring ratio $\ell / v$ is one. Along with the earlier assumption that $k_2 = 1$, the entire term simplifies to $(e-2)/(e-1) = 0.4 \, s$. The multiplicative factor $1 - e^{-\lambda_\ell}$ preceding the first term of \eqref{eq:H-prime} results from a boundary effect: when $\ell_*$ is very close to zero, it is very unlikely there are any particles already in the system between the soma and the target region. When the target region is in the middle of the axon, this contribution from particles in that section of the axon at time 0 is significant.

When $q_0 \sim 1$ as assumed earlier, then $\rho_0=q_0/\delta\sim 10^8$. Then if $\ell_*> 10^{-8}$, which corresponds to the target region being just one motor step down the length of the axon, we have $\rho_0\ell_*\sim 1$ and $\lambda_\ell>2.3$ so $1-e^{-\lambda_\ell}>0.9$. Looking at the first term inside the parenthesis, we note that $v\rho_0\sim 10^2$, while $p_\ell = 1 - \exp(-\ell k_2 / v) \approx 0.6$ is the probability that any given vesicle will be successful in detaching from transport in the target region, so $1/(v\rho_0 p_\ell) \sim 10^{-2}$. Meanwhile, the term $k_2 / (k_1 (k_2 + k_1))$, which is the expected time to bind to transport if a successful particle happens to be off transport at time 0, is 1/4 for the assumed values of $k_2$ and $k_1$. Therefore under the $q_0 \sim 1$ assumption, both this and the final term contribute significantly to the hitting time.

However, in the sparse material regime, say $q_0 \sim 10^{-3}$, we have then $\rho_0 \sim 10^5$, and the factor $1 - e^{-\lambda_\ell}>0.9$ when $\ell_* > 10^{-5}$. That is, if the target region is at least 1/100, or at least 1000 segments, down the length of a $10^{-3} m$ axon.  However, now the rate limiting factor is the wait time for the first successful vesicle to arrive in the target region, which is captured by the term $1/v \rho_0 p$. The product $v \rho_0 \approx 0.1$ measures the average rate at which new particles should arrive and together with the probability of success $p_\ell\approx 0.6$, we now have $1 / v \rho_0 p_\ell \approx 16 s$. So, in the sparse material regime the first summand in $\E{H'}$ giving the mean time for arrival of the particles to the target region dominates.

It is interesting to consider what happens to the arrival rate term under perturbations of the various parameters.  In particular, we draw the reader's attention to changes in $v$.  When viewing intermittent search in terms of a single particle, the probability of finding the target is strictly decreasing in $v$. Higher velocity seems to be the enemy of finding the target.  Indeed, from a system point of view, this implies that the system will require more trials before a successful particle arrives in the target region. What the equation \eqref{eq:H-prime} gives us is the ability assess how much more quickly the trials will happen.  In fact, the function $v (1 - \exp(-\ell k_2 /v)) $ is increasing in $v$. Therefore the entire expected wait time $\E{H'}$ is actually \emph{decreasing} in $v$.  That is to say, while the particles are less likely to succeed, they will be arriving more rapidly enough to counterbalance the lost time.  We believe that this kind of quantitative analysis will prove fruitful in future study when coupled with more details of the biology of deposition of materials in the cell membrane.

\subsection{Approaching Equilibrium}
\label{sec:approach-to-equilib}

We have seen above that the axon is very homogeneous at stochastic equilibrium on a space scale down to micrometers.  One of the beautiful properties of transport with reversible binding is that if it is locally out of equilibrium, the on-off dynamics can return the system to equilibrium on a much faster time scale than waiting for new material to arrive from the nucleus. Furthermore, as is discussed in \cite{welte08}, one of the key goals of biophysics investigation is the discovery of behaviors for which biochemical regulation is not necessary. This is of fundamental importance to the biological function of the system
because it means that the axon will automatically ``repair" itself
without central control of the repair process. 

How good is this mechanism? If a segment of the axon is far away from equilibrium, how long does it take to get back close to equilibrium?  
To investigate this question, we imagine that the axon is at stochastic equilibrium except for some segment $R$, where the total number of particles on and off transport, $Q_i(0)+ P_i(0)=0, \forall i\in R$, is zero at time $0$. Let $a$ and $\ep$ be given small numbers and suppose that $\lambda_{\infty}$ is the mean vector for $(Q_i, P_i)$ at stochasttic equilibrium.  We want to compute (an upper bound for) the time $t^*$ so that
\[         \mathbb{P}\{|(Q_i(t^*),P_i(t^*))- \lambda_{\infty}| \geq a|\lambda_{\infty}|, \forall i\in R\} \leq \ep,  \]
that is, the probability that $(Q_i,P_i), \forall i\in R$ is significantly different from $\lambda_{\infty}$ is very small. In the applications below we will choose $a = 0.1$ and $\ep = 0.05,$ and we will see that a 10 micron segment can recover in about 10 seconds, while a 1 millimeter segment will take about 1000 seconds or 15 minutes to recover. We study this question first for a single location $R=\{i\}$, and then use the estimates
derived to scale the results to segments of any length.

\begin{proposition} \label{prop:approach-equilib}
Let $(Q_i(0),P_i(0)) = (0,0)$ and let the constants $a > 0$ and $\ep \in
(0,1)$ satisfy the relationship $a^2 \ep |\lambdabf_\infty| > 1$ where
$\lambdabf_\infty=q_0 (1,\frac{k_2}{k_1})$  is the equilibrium vector of $(Q_i,P_i)$. Then
there exists $t_\ast>0$ such that $\forall t \geq t_*$,
$$
\p{|(Q_i(t),P_i(t)) - \lambdabf_\infty| \geq a |\lambdabf_\infty|} \leq \ep.
$$
In fact, the choice $
t_\ast = \alpha^{-1} \ln\Big(\frac{\sqrt{p |\lambdabf_\infty|}}{a
\sqrt{\ep |\lambdabf_\infty|} - 1}\Big) $
is sufficient, where 
\[\alpha := \frac{1}{2} \left( k_2 + k_1 + r - \sqrt{(k_2 + k_1 + r)^2
- 4 k_1 r}\right).\]
\end{proposition}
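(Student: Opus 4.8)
The plan is to reduce the claim to the behaviour of the single compartment $(Q_i,P_i)$ fed by a Poisson stream, use that its transient law remains product-Poisson, and then combine an exponential decay estimate for the means with a Chebyshev bound for the fluctuations. Since particles move only left to right, compartments $1,\dots,i-1$ evolve autonomously; started at the stationary law of Proposition~\ref{prop:equilib} they remain there, so $\E{Q_{i-1}(t)}\equiv q_0$ for all $t$. The joint initial law (stationary product-Poisson off compartment $i$, and the point mass at $(0,0)$ on compartment $i$) is product-Poisson, and the generator computation from the proof of Proposition~\ref{prop:equilib}, run now with time-dependent parameters, shows this form is preserved, so that for every $t\ge0$
\[ (Q_i(t),P_i(t)) \sim \mathrm{Pois}(m_Q(t))\otimes\mathrm{Pois}(m_P(t)), \]
where $\mathbf m(t)=(m_Q(t),m_P(t))$ solves the closed linear system
\begin{align*}
m_Q'(t) &= r q_0 - (r+k_2)\,m_Q(t) + k_1\,m_P(t),\\
m_P'(t) &= k_2\,m_Q(t) - k_1\,m_P(t),
\end{align*}
with $\mathbf m(0)=(0,0)$; writing this as $\mathbf m'=A\mathbf m+\mathbf b$ with $\mathbf b=(rq_0,0)$, the unique fixed point is $-A^{-1}\mathbf b=\lambdabf_\infty$.

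Next I would analyze the ODE. The characteristic polynomial of $A$ is $\mu^2+(k_2+k_1+r)\mu+k_1 r$, with discriminant $(r-k_1)^2+2k_2(r+k_1)+k_2^2>0$, so $A$ has two distinct negative eigenvalues, the one of smallest modulus being exactly $-\alpha$ for the $\alpha$ of the statement. Diagonalizing, $\mathbf m(t)-\lambdabf_\infty=-e^{At}\lambdabf_\infty$, hence $|\mathbf m(t)-\lambdabf_\infty|\le C\,e^{-\alpha t}$ with an explicit constant $C$ built from the eigenvectors of $A$ and the coordinates of $\lambdabf_\infty$. A short comparison argument (if, say, $m_Q$ first reached $q_0$, one checks $m_Q'\le0$ there, a contradiction) shows $0\le m_Q(t)<q_0$ and $0\le m_P(t)<q_0k_2/k_1$ for all $t$, so the fluctuation variances $\V{Q_i(t)}=m_Q(t)$ and $\V{P_i(t)}=m_P(t)$ never exceed their equilibrium values, and $\V{Q_i(t)}+\V{P_i(t)}$ is at most a fixed multiple of $|\lambdabf_\infty|$.

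For the probability estimate, for any $t$ with $a|\lambdabf_\infty|-C e^{-\alpha t}>0$ the triangle inequality followed by Chebyshev's inequality gives
\begin{align*}
\p{|(Q_i(t),P_i(t))-\lambdabf_\infty|\ge a|\lambdabf_\infty|}
&\le \p{|(Q_i(t),P_i(t))-\mathbf m(t)|\ge a|\lambdabf_\infty|-C e^{-\alpha t}}\\
&\le \frac{\V{Q_i(t)}+\V{P_i(t)}}{\big(a|\lambdabf_\infty|-C e^{-\alpha t}\big)^2}.
\end{align*}
Bounding the numerator by a multiple of $|\lambdabf_\infty|$ and requiring the right side to be at most $\ep$ is equivalent to $C e^{-\alpha t}$ lying below a positive threshold of order $a|\lambdabf_\infty|-\sqrt{|\lambdabf_\infty|/\ep}$; the hypothesis $a^2\ep|\lambdabf_\infty|>1$ is exactly the condition (up to the constant $p$) that this threshold is positive, i.e.\ that the requested tolerance $a|\lambdabf_\infty|$ exceeds the unavoidable Poisson noise floor $\sqrt{|\lambdabf_\infty|/\ep}$. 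Solving for $t$ and collecting $C$ together with the variance constant into the single number $p$ then yields the stated sufficient value $t_\ast=\alpha^{-1}\ln\!\big(\sqrt{p|\lambdabf_\infty|}\,/(a\sqrt{\ep|\lambdabf_\infty|}-1)\big)$.

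The delicate step is the transient product-Poisson claim: one must be sure the output of compartments $1,\dots,i-1$ acts as a genuine rate-$rq_0$ Poisson input that is independent of compartment $i$'s own past, so that the marginal of $(Q_i(t),P_i(t))$ is \emph{exactly} product-Poisson with means $\mathbf m(t)$. This is the time-dependent analogue of Proposition~\ref{prop:equilib} (the ``product structure of the transient dynamics'' mentioned in the introduction), and can be obtained from the generating-function equation of this linear migration--immigration process or from the output theorem for Jackson-type networks (cf.\ \cite{K79}). A secondary point is that $A$ is not normal, so $|e^{At}\lambdabf_\infty|$ is not simply $|\lambdabf_\infty|e^{-\alpha t}$, and the constant $C$ (hence $p$) requires the eigenvector bookkeeping to produce the precise $t_\ast$ of the statement.
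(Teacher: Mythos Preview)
Your approach is essentially that of the paper: use the product-Poisson law of $(Q_i(t),P_i(t))$ so that the variance equals the mean, control the mean by the linear ODE $\lambdabf'(t)=-A_1\lambdabf(t)+q_0r\,\ebf_1$, and finish with a triangle inequality plus Chebyshev. The paper organizes the last step by introducing a splitting parameter $\beta\in(0,1)$, requiring $|\lambdabf(t)-\lambdabf_\infty|\le a\beta|\lambdabf_\infty|$ and then $\frac{|\lambdabf(t)|}{a^2(1-\beta)^2|\lambdabf_\infty|^2}\le\ep$, and optimizes over $\beta$, which is algebraically equivalent to what you do.

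One point of divergence is worth flagging. The symbol $p$ in the displayed $t_\ast$ is a typo for $\ep$: the paper obtains exactly $t_\ast=\alpha^{-1}\ln\bigl(\sqrt{\ep|\lambdabf_\infty|}\big/(a\sqrt{\ep|\lambdabf_\infty|}-1)\bigr)$, with no unspecified constant. To achieve this it simply asserts $|e^{-A_1 t}\lambdabf_\infty|\le e^{-\alpha t}|\lambdabf_\infty|$; your caution that $A_1$ is non-normal and that this bound needs the eigenvector bookkeeping is well taken---the paper glosses over it. Your interpretation of $p$ as a catch-all for $C$ and the variance constant is therefore a misreading induced by the typo rather than a different argument: if you accept the paper's (slightly optimistic) operator-norm step, your constant $C$ becomes $|\lambdabf_\infty|$ and everything collapses to the stated formula with $p=\ep$.
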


We note that given a particular choice of $a$ and $\ep$, the  condition $a^2\ep|\lambda_{\infty}| > 1$ guarantees that there are ``enough'' particles.

\begin{proof} Note that for any given $\beta \in (0,1)$, we may
choose $t_\ast > 0$ such that for all $t \geq t_\ast$, the vector of
means $\lambdabf(t):=\E{(Q_i(t),P_i(t))}$ satisfies $|\lambdabf(t) -
\lambdabf_\infty| \leq a \beta |\lambdabf_\infty|$. Then
\begin{align*}
\p{|(Q_i(t),P_i(t)) - \lambdabf_\infty| \geq a |\lambdabf_\infty|} &\leq
\p{|(Q_i(t),P_i(t)) - \lambdabf(t)| + |\lambdabf(t)
-\lambdabf_\infty| \geq a |\lambdabf_\infty|} \\
&\leq \p{|(Q_i(t),P_i(t)) - \lambdabf(t)| \geq a (1 - \beta)
|\lambdabf_\infty|}
\end{align*}
Applying Chebyshev's Inequality, and observing that the variance of
a Poisson random variable is equal to its mean, we conclude that
$$
\p{|(Q_i(t),P_i(t)) - \lambdabf_\infty| \geq a |\lambdabf_\infty|} \leq
\frac{\V{|(Q_i(t),P_i(t))|}}{a^2 (1 - \beta)^2 |\lambdabf_\infty|^2} =
\frac{|\lambdabf(t)|}{a^2 (1 - \beta)^2|\lambdabf_\infty|^2}
$$
Since the initial condition for both $P_i$ and $Q_i$ are less than their
respective equilibrium values, each are monotonically increasing
functions and the above reduces to
$$
\p{|(Q_i(t),P_i(t)) - \lambdabf_\infty| \geq a |\lambdabf_\infty|} \leq
\frac{1}{a^2 (1 - \beta)^2 |\lambdabf_\infty|}
$$
for all $t > t_\ast$. In order to satisfy the requirement that the
right hand side must be less than $\ep$, we solve for $\beta$ and find $
\beta = 1 - (a \sqrt{\ep |\lambda_\infty|})^{-1} $
provided that $a \sqrt{\ep |\lambda_\infty|} > 1$.

It remains to study the convergence of the mean and the appropriate
choice of $t_\ast$. The dynamics of the mean vector $\lambdabf(t)$
are given by the ODE
\begin{equation} \label{eq:lambda-2d-ode}
\frac{d}{dt} \lambdabf (t) = - A_1 \lambdabf (t) + q_0 r {\ebf}_1
\end{equation}
where ${\ebf}_1$ is the unit vector $(1,0)$ and $
A_1 = \left(\begin{array}{cc} k_2 + r & - k_1 \\ - k_2 & k_1
\end{array}\right).$

The solution to \eqref{eq:lambda-2d-ode} is $
{\lambdabf} (t) = {\lambdabf}_\infty + e^{- A_1 t} (\lambdabf (0) -
{\lambdabf}_\infty) $
where ${\lambdabf}_\infty = q_0 r A_1^{-1} \ebf_1 = q_0 (1,
\frac{k_2}{k_1})$. This yields the estimate
$$
|\lambdabf (t) - {\lambdabf}_\infty| \leq \left| e^{-A_1 t}
{\lambdabf}_\infty \right| \leq e^{-\alpha t} |{\lambdabf}_\infty|
$$
where $\alpha$ is the smaller of the eigenvalues of $A_1$. 
Noting that $\alpha > 0$, $t_\ast$ may be chosen so that $e^{-\alpha
t_\ast} \leq a \beta$, i.e. $ t_\ast = \alpha^{-1} \ln (1/(a \beta))$
is an upper bound for the time to be close to equilibrium with high probability.
\end{proof}

In order to calculate return to equilibrium at various scales, we now suppose that the whole axon is in statistical equilibrium except for a segment $R$ of length $\Delta=\delta10^{\nu}$ in which we will assume that there are no particles either on or off the track. Proposition \ref{prop:approach-equilib} covered the case $\nu =0$. We are interested in $\nu = 1, \ldots, 8.$ We imagine that the axon is broken up into $10^{8-\nu}$ segments of length $\Delta$. In this rescaled system, the unbinding and binding rates per particle, $k_2$ and $k_1$ remain the same, as well as the mean on-transport velocity $v$. In order to retain this mean
velocity, the rate of lateral stepping must be decreased to $\tilde r = r 10^{-\nu}$.

The ODE for the mean vector of the rescaled system is given by
\begin{equation} \label{eq:lambda-2d-ode-rescaled}
\frac{d}{dt} \tilde \lambdabf(t) = - \tilde A_1 \tilde \lambdabf(t)
+ q_0 r \ebf_1
\end{equation}
with $ \tilde A_1 = \left(\begin{array}{cc} k_2 + \tilde r & - k_1 \\
- k_2 & k_1 \end{array}\right). $

We note that the last term in \eqref{eq:lambda-2d-ode-rescaled} contains an $r$ rather than an $\tilde r$.  This is because the input rate is unchanged while the exit rate is diminished. The resulting equilibrium value is therefore rescaled as well, $
\tilde \lambdabf_\infty = q_0 r \tilde A_1^{-1} \ebf_1 = q_0
\frac{r}{\tilde r} \binom{1}{k_2 / k_1}=q_0 10^{\nu} \binom{1}{k_2 / k_1}.$
Both components of this vector are of order $10^\nu$, as expected.

Using the parameters discussed in Section \ref{subsec:model}: $k_2 = 1, k_1 = \frac{1}{3}, v = 10^{-6}m/s, r = 10^2 s^{-1}$. This implies $\tilde r = 10^{2-\nu}s^{-1}.$ We choose the thresholds to be $a = 0.1$ and $\ep = 0.05$, so the constraint that $a^2 \ep |\lambdabf_\infty| > 1$ requires that $|\lambdabf_\infty| > 2 \times 10^3$. Since $0.25 \leq q_0 \leq 2.5$, this is indeed the case if $\nu > 3$, i.e. if the segment has length greater than 10 microns.

It follows from Proposition \ref{prop:approach-equilib} that the time to equilibrium, $\tilde t_\ast$, is proportional to $\tilde \alpha^{-1}$ where $\tilde \alpha$ satisfies:
\begin{align*}
\tilde \alpha &= \frac{1}{2} \left( k_2 + k_1 + \tilde r -
\sqrt{(k_2 + k_1 + \tilde r)^2 - 4 k_1 \tilde r}\right) \\
&= \frac{2 k_1 \tilde r}{k_2 + k_1 + \tilde r + \sqrt{(k_2 + k_1 +
\tilde r)^2 - 4 k_1 \tilde r}}.
\end{align*}
For the given parameter values (with $|\lambdabf_\infty| > 5 \times 10^3$ in particular), the constant of proportionality $\ln\left(\sqrt{\ep |\lambdabf_\infty|} / (a \sqrt{\ep |\lambdabf_\infty|}-1)\right)$ is contained in the interval $(2.3, 2.5)$ and does not have a impact on how the relaxation time scales with $\nu$. 
In terms of analyzing $\tilde \alpha$, we note that $k_1 \tilde r$ is small compared to $k_2$. To leading order, $ \tilde \alpha \sim (k_1 \tilde r)/(k_2 + k_1 + \tilde r) \sim
10^{2 - \nu} s^{-1} $. It follows that $\tilde t_\ast \sim 10^{\nu - 2} s.$
Thus, for a 10 micron segment ($\nu = 3$) the time to recover is about 10 seconds and for a 1 millimeter segment ($\nu = 5$) the time to recover is 1000 seconds or 15 minutes. The time to recover depends, of course, on the parameter $\ep$ that represents what we mean by ``close.'' We also note that one can compute various measures of time to recover using the PDE models discussed in Section 3. \\

\section{Discussion}

In this paper, we created a spatial Markov chain model for studying various aspects of fast axonal transport.  Previous models that use PDEs treat the velocity of transport as constant when particles are attached to the fast transport system. Since it is known that transport along the microtubules is itself stochastic, it is important to have a fully stochastic model. Our model allows us to unify and extend previous work. In Section 3.2 we show that from the particle perspective as the compartment size $\delta$ tends to zero, our model converges in distribution on the Skorokhod space of \emph{c\`adl\`ag} functions to the piecewise-deterministic model analyzed by Brooks \cite{B99}. Namely, we show that the paths of particles in our model converge to those of particles in a stochastic non-compartmental model. The argument proceeds by an explicit computation of the finite-dimensional distributions and a tightness argument. In Proposition \ref{prop:wavefront} we give a rigorous probabilistic proof of why the paths of particles  follow ``approximate traveling waves'' described by other authors \cite{reed90,B99,friedman05}.  This proof is based on stochastic averaging arguments which show that a functional central limit theorem holds on the space of continuous functions for the paths of particles as the compartment size decreases. The diffusion of particles around their mean position can consequently be approximately described jointly for all time by a Brownian motion with the appropriate diffusion coefficient.

In Section 2, we show how to use existing experimental data to indentify (ranges for) all the parameters of our model. In light of this, we can use the model to investigate several important biological questions. These are based on describing the spatial distribution of multiple particles in our model. In Section 4.1 we derive the stationary distribution for the number of particles in different compartments on and off transport along the axon. This gives an explicit description of the stochasticity of the system that is present even after a long time. In Section 4.2 we derive estimates for how homogeneous the axon is on different spatial scales. In Section 4.3 we study a question introduced by Bressloff, by providing a stochastic quantity which describes the balance the system needs to achieve between rapid transport that brings new material quickly and efficient local search that improves time of delivery to a target. 
Finally, in Section 4.4, we use the model to calculate the length of time that it would take for axonal segments of different lengths to recover to near stochastic equilibrium after they have been depleted of vesicles.

In our stochastic compartmental model all event wait
times are assumed to be exponential random variables, but this is certainly a
simplification. As an example, the stepping process of kinesin is a
well-studied though still not completely understood phenomenon. Much
work has focused on assessing the dependence of the mean rate of
translocation on both the load and the local concentration of ATP
\cite{block99} \cite{block00}. Implicit in this analysis is the
assumption of exponential wait times with state dependent rate
parameters. However, when fitting to data and matching dispersion
information the authors in \cite{fisher01} found it necessary to
generalize the wait time distribution.  This was followed by
more detailed models for which it was shown that load carrying could in fact regularize
the stepping times of kinesin motors \cite{schilstra06}
\cite{deville08}.
 Generalizing waiting times would significantly affect our results. Since the particle position process is no longer Markov, we no longer have the direct connection to the previous results stated in Section 3.1.-3.3., nor can we use the stationary distribution employed in Section 4.1 and used for addressing the biologicals questions in Sections 4.2.-4.4.
In light of the known need for generalized wait times in the
stepping process, it seems likely that detailed observation of the
rebinding process will call for new mathematical models as well. Recall that when
a vesicle unbinds from a microtubule it is unclear whether it
typically rebinds to the same microtubule or if it explores the
region significantly via diffusion before finding a different
microtubule to bind to.  In the latter case, a more appropriate
model for rebinding time would be to solve some kind of first
passage time problem and use that distribution for the rebinding
wait.

An important aspect of the biology of axonal transport is not included in the model presented here, namely the
local deposition and eventual degradation of transported materials. For example, sodium channels and sodium pumps are synthesized at the soma, transported down the axon and deposited in the axonal membrane, ether uniformly as in an unmyelinated axon or at the nodes of Ranvier in a myelinated axon.   Channels and pumps are proteins with half-lives on the order of days to weeks. The present model can be extended to include a deposition compartment at each location, and, clearly, the processes of deposition and subsequent degradation will cause the mean number of particles both on and off transport to be monotone decreasing as one moves down the axon. How inhomogeneous this makes the axon will depend on the details of deposition and degradation rates. Our preliminary calculations indicate that long axons, such as the meter-long axons in human sciatic nerve, would be quite inhomogeneous. This is an important biological issue because it is controversial whether the machinery for protein synthesis (i.e. ribosomes) exist in axons \cite{twiss09}. We have also not included retrograde transport or the fact that some axons may have location-dependent unbinding rates \cite{dixit08}. All of these issues will be the subject of future work.

\section*{Acknowledgments}
The authors are grateful to Professors H.~Frederik Nijhout and Vann
Bennett of Duke University and Professor Anthony Brown of The Ohio
State University for helpful discussions. This work was supported by NSF grants DMS-061670 and EF-1038593, and an NSERC Discovery Grant.

\end{document}